\documentclass[conference,a4paper]{IEEEtran}
%\addtolength{\topmargin}{9mm}
\usepackage{multicol}
\usepackage{array}

\pagestyle{plain}
\usepackage{epsfig}
\usepackage{framed}
\usepackage[utf8]{inputenc}
\usepackage{amssymb}
\usepackage{amsthm}
\usepackage{amsmath}
\usepackage{amsfonts}
\usepackage{graphicx}
\usepackage{comment}
\usepackage{float}
\usepackage{subfigure}
\usepackage[ruled]{algorithm2e}

\allowdisplaybreaks

\usepackage{enumitem}
\usepackage{mathrsfs}
\usepackage{graphicx}
\usepackage{multirow}
\usepackage{color,soul}
\usepackage{color}
\usepackage{float}
\usepackage{enumitem}
\usepackage{adjustbox}
\usepackage{algorithmicx}
\usepackage{hyperref}
\usepackage{algcompatible}
\usepackage[noend]{algpseudocode}
\usepackage[outercaption]{sidecap}
\DeclareMathAlphabet{\mathpzc}{OT1}{pzc}{m}{it}
\algrenewcommand\alglinenumber[1]{\scriptsize #1:}
\definecolor{mygray}{gray}{0.6}

\newtheorem{definition}{Definition}
\newtheorem{lemma}{Lemma}
\newtheorem{theorem}{Theorem}

\newcommand{\pr}[1]{\mathrm{Pr}#1} %Probability

\newcommand{\mc}[1]{\mathcal{#1}}
\newcommand{\ms}[1]{\mathsf{#1}}

\newcommand{\remove}[1]{}

\newcommand{\ikemg}{\mathsf{OWSKA}_{a}\mathsf{.Gen}}
\newcommand{\ikeme}{\mathsf{OWSKA}_{a}\mathsf{.Alice}}
\newcommand{\ikemd}{\mathsf{OWSKA}_{a}\mathsf{.Bob}}
%%\newcommand{\ikem}{\mathsf{iK}}
%%\newcommand{\ikemg}{\mathsf{iK.Gen}}
%%\newcommand{\ikeme}{\mathsf{iK.Enc}}
%%\newcommand{\ikemd}{\mathsf{iK.Dec}}

%\newcommand{\dckem}{\mathsf{CRS\text{-}Kem.Decaps}}
%\newcommand{\eckem}{\mathsf{CRS\text{-}Kem.Encaps}}
%\newcommand{\gckem}{\mathsf{CRS}\text{-}\mathsf{Kem.Gen}}

  %complex i
  % exponential
 %e^(.)
 %e^i(.)
  % identity matrix
%\newcommand{\remove}[1]{}

 %Trace

  %Expected value

%\newcommand{\ent}[1]{\mathbb{H}#1}  %Entropy
  %Entropy
  %Mutual Information

 %Galois Field

%\renewcommand{\varepsilon}{\epsilon}

\newcommand{\skp}{SKA$_p$~}
\newcommand{\ska}{SKA$_a$~}

\newcommand{\x}{\mathbf{x}}
\newcommand{\X}{\mathbf{X}}
\newcommand{\y}{\mathbf{y}}
\newcommand{\Y}{\mathbf{Y}}
\newcommand{\z}{\mathbf{z}}
\newcommand{\Z}{\mathbf{Z}}

%\newcommand{\Z}{\mathds{Z}}

%\newcommand{\cC}{\mathcal{C}}
%\newcommand{\E}{\mathcal{E}}

 % Complex Numbers

%\newcommand{\zo}{\{0,1\}}

\providecommand{\eqref}[1]{(\ref{#1})}
\newtheorem{construction}{Construction}

%\theorembodyfont{\upshape}

%\theorembodyfont{\upshape}

%\theorembodyfont{\upshape}

\newcommand{\msa}{\mathsf{D}}

\newcommand{\pxyz}{$P_{XYZ}$ }

\title{A One-way Secret Key Agreement with  Security Against Active Adversaries}

\author{
    \IEEEauthorblockN{Somnath Panja\IEEEauthorrefmark{1}, Shaoquan Jiang\IEEEauthorrefmark{2}, Reihaneh Safavi-Naini\IEEEauthorrefmark{1}}
    \IEEEauthorblockA{\IEEEauthorrefmark{1}University of Calgary, Canada}
    \IEEEauthorblockA{\IEEEauthorrefmark{2}University of Windsor, Canada}
}

%\author{Somnath Panja\inst{1} \and 
%Shaoquan Jiang\inst{2}\and 
%Reihaneh Safavi-Naini\inst{1}}

%\institute{University of Calgary, Canada \and University of Windsor, Canada}
%\title{CCA-Secure Hybrid Encryption in  Correlated Randomness Model and  KEM Combiners}
%\author{Somnath Panja, Nikita Tripathi, Shaoquan Jiang , Reihaneh Safavi-Naini \\University of Calgary, University of Windsor}
%\author{}
\date{}

\begin{document}

\maketitle
\noindent
%\makebox[\linewidth]{\small \today}
\begin{abstract}
In a one-way secret key agreement (OW-SKA) protocol in source model, Alice and Bob have private samples of two correlated variables $X$ and $Y$ that are partially leaked to Eve through $Z$, and use a single message from Alice to Bob to obtain a secret shared key. We propose an efficient secure OW-SKA when the sent  message  can be tampered with by  an {\em active adversary}.  The construction follows the approach of an existing OW-SKA with security against passive adversaries, and uses a specially designed secure %message authentication code
Message Authentication Code (MAC) that is secure when the key is partially leaked, to achieve security against active adversaries. We prove the secrecy of the established key and robustness of the protocol,  and discuss our results.

\end{abstract}
\begin{IEEEkeywords}
%\keywords
 One-Way Secret Key Agreement, Secret Key Agreement in Source Model  Information theoretic security, Post-quantum security
\end{IEEEkeywords}

\section{Introduction}

Secret key agreement is a fundamental problem in cryptography:
Alice wants to share a  random string called {\em key} with Bob, such that a third party Eve who has access to the communication channel between them, has no information about the key. 
Information theoretic secret key agreement (SKA) was first proposed by  Maurer~\cite{Maurer1993} and Ahlswede~\cite{Ahlswede1993}.
In their model that is referred to as {\em the source model},
Alice,  Bob and Eve  have private samples of   random
variables (RVs) $ X,  Y$ and  $Z$ with a joint probability distribution $P_{XYZ}$ that is known to all parties.
The goal of Alice and Bob  is to obtain a  common secret key
by exchanging messages over a public and error
free channel. %that is visible to Eve.
An important quality parameter of an SKA protocol  is  the length  $\ell$ of the
established secret key. 
%%%
In the setting that Alice, Bob and Eve's variables $\X,\Y$  and $\Z$ are $n$
independent realizations of $X,Y$ and $Z$ distributed according to $P_{XYZ}$, 
the {\em secret-key rate } is defined  as the maximal rate at which Alice and Bob generate a highly secret key where the rate is   (informally) given by $\ell/n$.
%%%%
%of their random variables
%%%%
\remove{
A natural setting is
when Alice, Bob and Eve have access to $n$
independent realizations of their random variables $[X_1,X_2\cdots X_n]$, $[Y_1,Y_2\cdots Y_n]$ and $[Z_1,Z_2\cdots Z_n]$, denoted by $(\X, \Y, \Z)$.
%(boldface denotes a vector of length $n$)
%$(X^n, Y^n, Z^n)$,
The {\em secret-key rate } is defined  as the maximal rate at which Alice and Bob generate a highly secret key where the rate is measured relative to the number of realizations of RVs, and is  (informally) given by $\ell/n$.
}
%%%%%
\remove{

The{\em secret key (SK) capacity $C_s(X,Y,Y)$} is the highest achievable key rate when $n \rightarrow \infty$. The SK capacity of a general distribution
is a long-standing open problem. In a real
life application of SKA protocols, $n$,  the number of available
samples to the parties, is finite and the efficiency of the
protocol is captured by the rate and in finite-block regime.
}
%%%%
We consider {\em one-way secret key agreement} (OW-SKA) %model  %in which Alice,  Bob and Eve have samples of a known probability distribution, and
where Alice sends  a single message over the public channel to Bob, to arrive at a shared key.
%This problem was first considered in \cit{}. 
OW-SKA problem is important in practice because it
avoids interaction between Alice and Bob,
\remove{
that  in addition to longer time and more complexity to arrive at an established key, requires
stateful protocols  which would introduce additional vulnerabilities in the implementation.  
%As noted in \cite{Renner}, 
The problem is
also 
}
as well as being  theoretically interesting because of its  relation  to circuit
polarization and immunization of public-key encryption in complexity theory and cryptography \cite{Holenstein2006}. %and its application  ??? to oblivious transfer [19], an important cryptographic primitive.

{\em Adversaries.}
%The original model of 
SKAs were first studied with the assumption that Eve is passive and only observes the communication channel. 
\remove{
SKA in this case requires that when one party accepts, the other also accepts and with a high probability both generate the same key.
}
%%%
Maurer \cite{Maurer1997authencation} considered 
%, the model was considered 
a more powerful  adversary who can eavesdrop and tamper with the communication. 
Against such adversaries  the protocol is required to establish a   secret key when the adversary is passive, and %if active, 
with probability $1-\delta$ % at least one of the
Alice {\em or } Bob must detect the tampering, or a  key that is unknown to Eve  be established. It was proved \cite{Maurer1997authencation} that SKA with security against active adversaries exists only when certaing {\em simulatablity conditions}  are satisfied (see section \ref{ap:related}.)

%%%
\remove{
In \cite{Maurer1997authencation}
%, the model was considered 
a more powerful  adversary was considered where
% the adversary  is active, and
in addition to eavesdropping the adversary can modify or  block the 
transmitted message. This is a more realistic model for communication in practice. 
In this case the protocol must establish a   secret key when the adversary is passive, and if active, with probability $1-\delta$ % at least one of the
Alice or Bob must detect the corruption, or a  key that is unknown to Eve  be established. It is proved \cite{Maurer1997authencation} that SKA with security against active adversaries exists only when some {\em simulatablity conditions}  are satisfied (see section \ref{ap:related}.) 
}
{\em In the following we use \skp and \ska to denote SKA with security against passive and active adversaries, respectively.}

{\em Constructions.}  There are a number of constructions   for OW-\skp  %with security against passive adversaries 
that (asymptotically) achieve the highest possible secret-key rate for a given distribution  $P_{\X\Y\Z}$
%$$P_{X^nY^nZ^n$. 
\cite{holenstein2005one,
renes2013efficient,Chou2015a,sharif2020}. 
 It was proved (Theorem 9, \cite{Maurer1997authencation}) that  secret-key rate of \ska
 %with security against active adversaries
 is the same as the secret-key rate of \skp %against passive adversaries, 
 if secure key agreement is possible (simulatability conditions hold).
 %%%
 Construction of protocols with security against active adversaries  however is less studied.
 %
  %In \cite{Maurer1997authencation} 
  It was also shown \cite{Maurer1997authencation}, through a construction, that it is possible to provide 
 message authentication when Alice, Bob and Eve have private samples of $P_{\X\Y\Z}$. In \cite{MaurerW03b} a concrete construction of a MAC was given, when the key $X$ is partially leaked to Eve through $Z$ for a known $P_{\X\Z}$. 
These MACs can be used to provide protection against active adversaries in \ska. 

 %a construction of secure authentication  message transmission when thwith security against active adversaries  for a setting which is known as {\em Maurer's satellite } where the random variables of parties are noisy version of a binary random string broadcasted by a satellite, is given. 
% The construction is shown to be extendable to the case of general distribution $P_{X^nY^nZ^n}$. The construction is inefficient for short messages, and uses a two step bootstrapping process to achieve efficiency for long messages.  

%%%%%%%%%%%%%%MOVE

%
\vspace{-0.5em}
\subsection*{ Our Work}
 We propose an efficient OW-\ska %protocol with security against active adversary 
 for the setting of 
$P_{\X\Y\Z}$.  The construction is based on  a previous construction of  OW-\skp %protocol with security against passive adversary 
\cite{sharif2020} that achieves the highest secret-key rate for a given distribution,
and employs two hash functions, $h$ and $h'$ that are used for reconciliation and key extraction, respectively. Security proof of the protocol determines  parameters of the hash functions in terms of distribution $P_{\X\Y\Z}$, number of samples, key length and  security and reliability parameters  of the system.
We modify the protocol in two steps.\\
1) We  modify the inputs to $h$ (and so the message sent from 
Alice to Bob) is modified.  Theorem~\ref{Thm:ikemotsecurity}  recalculates the parameters of the two hash functions to achieve security against passive adversary, and %obtain  
the %resulting 
key length. This will also give 
%
%This modification gives {\color{red} a new reconciliation protocol} and determines the secret key rate 
the key length of our final construction when the protocol succeed to establish a key against an active adversary. \\
2) Noting that one of the inputs to $h$ is Alice's private sample $\x$, one can see the hash function as effectively a keyed hash function. This MAC however is different from traditional MAC systems that use a shared secret key. In Section \ref{owskamac} we define information theoretically secure MAC in correlated randomness setting %where correlation is determined by 
of $P_{\X\Y\Z}$,  with security against impersonation and substitution attacks, and 
design a MAC with provable  security for the case when $\X=\Y$  and the shared key is partially leaked through $\Z$.
Using this MAC, which is a keyed hash function, for the hash function $h$ in our protocol, gives us a secure \ska. In Theorem~\ref{mac2:ctxt} we prove robustness of the protocol against active adversaries. 
%special hash  function for $h$ and in Theorem \ref{} prove security of the protocol against active adversaries using this hash function.

To our knowledge the formal definition of MAC in  correlated randomness setting of $P_{\X\Y\Z}$ is new, and our construction of MAC is a new efficient construction with proved concrete security for special case of $\X=\Y$, and so  %the definition and construction
would be of independent interest.
 Other known constructions of MACs   for  the general setting of $P_{\X\Y\Z}$ are due to Maurer \cite{Maurer1997authencation} that uses the theory of {\em typical sequences}  to prove asymptotic security (proof is not publicly available), and   for the case of $\X=\Y$ due to Maurer et al. \cite{MaurerW03b}. In Section \ref{comparison}, we compare our MAC with the MAC due to \cite{MaurerW03b}. 
%The novelty of our construction is using a single (keyed) hash function $h$ for providing reconciliation and protection against active adversaries by serving as a MAC.

\begin{comment}
{\color{magenta} REMOVE IN ISIT VERSION

We note that a OW-SKA only guarantees that with the high
probability the two parties will share a key. However additional
messages are required for the two parties to know about the
success of the protocol.
}
\end{comment}
%\vspace{-.5em}
\subsection{Related work}
\label{ap:related}
%\vspace{-.5em}

Key establishment with information theoretic setting has been studied in Quantum Key Distribution protocols (QKD) where the correlated random variables of Alice and Bob are obtained using quantum communication \cite{BB84}, as well as %hence also been considered in 
{\em fuzzy extractor setting} \cite{eurocryptDodisRS04} %which can be seen as a special case of Maurer and Ahlswede model where $\Z =0$ (and so initial leakage to the adversary) 
where  $\X $ and $\Y $ are samples of the same random source with a bound on the {\em distance}  between the two. Commonly used distance functions are Hamming distance and set difference. % where the distance function can vary in different application.

%\textit{Information theoretic key agreement} %
%in source model was first studied by Maurer \cite{Maurer1993}, and Ahlswede and
\begin{comment}
{\color{magenta} -REMOVE FROM ISIT

Key agreement protocols in source model follows the pioneering work of Wyner on wiretap channels \cite{W75}, and its extension by Csisz\'ar  to broadcast channel \cite{Ahlswede1993}, where the goal is transmission of a private  message from Alice to Bob over noisy channels, with access to public channels.
%Key agreement protocols and has led to a long line of research on information reconciliation \cite{bennett1988privacy,renner2004smooth,Holenstein2011,tomami2014}.
The two main steps in information theoretic key agreement protocols after the establishment of the initial correlation, are {\em information reconciliation}  where the goal is to arrive at a common random string by Alice and Bob, and {\em privacy amplification}  where the goal is to extract a secret key from the shared string. Both these steps have been widely studied \cite{minsky2003set,bennett1988privacy,renner2005simple}.
%One-message key establishment uses a single message from Alice to Bob to establish a shared key
\cite{holenstein2005one,
renes2013efficient,
Chou2015a,sharif2020}.
%
%for limited settings. Their results were then improved
%and extended  to general IID distributions by Maurer and Wolf \cite{Maurer1999}.
%Implementation of key agreement protocols and engineering techniques for correlation generation between devices in wireless setting,
%for % with emphasis on
 %key agreement, 
% have   been extensively studied   %applications
%  (see the review paper \cite{Zhang2016} and the references therein). 
  %The extension of the secret-key agreement to more than two parties was  studied by Csisz\'ar and Narayan \cite{csiszar2000common}.
 % Variants of the secret-key agreement problem have been  studied in the 
 
 %and have generated  where the goal is to turn noisy information into keys usable for  cryptographic application.
%

}
\end{comment}
Key establishment % in source model 
with security against active adversaries was studied in \cite{maurer2003authen1,maurer2003authen2,Renner2004exact,kanukurthi2009key}. Feasibility of  information theoretic %key agreement protocol when the adversary is active ??? 
\ska was
%first shown by  Renner and Wolf \cite{Renner2004exact} ??? CHECK - IS IT MAURER ?.
formulated by Maurer through  {\em simulatability} property that is defined as follows: $P_{\X\Y\Z}$ is $\X$-simulatable  by Eve if they can send $\Z$ through a simulated channel $P_{\hat{\X}|\Z}$ whose output $\hat{\X}$ has the same joint distribution with $\Y$ as $\X$.
One can similarly define  $\Y$-simulatability. SKA cannot be constructed if $P_{\X\Y\Z}$ is $\X$-simulatable or $\Y$-simulatable.

\begin{comment}
{\color{magenta} -REMOVE FROM ISIT

Construction of \ska (assuming the required simulatability conditions hold) requires sending messages %securely
over an unauthenticated channel. In above, we discussed the construction of message authentication codes that provide security in this setting.
%i.e. Alice and Bob do not share a secret key, and use their private samples $\X$ and $\Y$ that are partially leaked through $\Z$ as their secrets, are given in \cite{Maurer}, and  in the case $\X=\Y$, in \cite{}.
%
%practical protocol is then proposed in 
Robustness in the setting of fuzzy extractors where correlated variables are ``close'' according to some distance measure is given in \cite{dodis2006robust,kanukurthi2009key}.  
%An interactive protocol based on \cite{dodis2006robust}  was proposed for a more general setting with less entropy loss by Kanukurthi and  Reyzin \cite{kanukurthi2009key}.

%Quantum-safe key agreement using computational assumptions have also been intensively studied  \cite{Alkim2016b,Bos2016,chen2016report}. 

    }

\end{comment}

\vspace{1mm}
\noindent
{\bf Organization.}
Section~\ref{sec:pre} is preliminaries. %The construction of robust secure OW-\skp  protocol is in 
Section~\ref{sec:ikem-inst} is the construction of secure OW-\skp. 
 Section \ref{robustness} %we give 
 is the construction of MAC % (keyed strong universal hash function,
 and % prove
 its security. %properties against both passive and active adversaries. 
Section~\ref{robustnessanalysis} uses the MAC to construct a secure  \ska. %and prove its security and robustness.
%compares our protocol with other OW-SKA protocols. Finally, we conclude the paper and provide directions for future work in 
Section~\ref{conclusion} concludes the paper. 
% 
%Section~\ref{section:preprocessing} gives ...
%in Section~\ref{sec:combiner}.
% Section~\ref{sec:ikem-inst} is 
 
\section{Preliminaries}
\label{sec:pre}
%\vspace{-0.5em}
\noindent
%\textbf{{Notations.}}
We use capital letters (e.g., $X$) to 
denote 
 random variables (RVs),  and  lower-case letters (e.g., $x$) for their instantiations. Sets are denoted by calligraphic letters,  (e.g. $\mathcal{X}$), and the size of $\mathcal{X}$
is denoted by $|\mathcal{X}|$.
We denote vectors using boldface letters; for example $\X=(X_1,\cdots,X_n)$ is a vector of $n$ RVs, and its realization is given by $\x=(x_1,\cdots,x_n)$.  $U_\ell$ denotes an RV with uniform distribution over $\{0,1\}^\ell, \ell \in \mathbb{N}$. If $X$ is a discrete RV, we denote its probability mass function (p.m.f)
  by $\mathrm{P}_X(x)=\mathsf{Pr}(X=x)$. The conditional p.m.f. of an RV $X$ given RV $Y$ is denoted as $\mathrm{P}_{X|Y}(x|y)=\mathsf{Pr}(X=x|Y=y)$. For two RVs $X$ and $Y$ defined over the same domain $\mathcal{L}$, the statistical distance between $X$ and $Y$ is given by ${\rm \Delta}(X,Y)=\frac{1}{2} \sum_{v\in {\cal L}} |\Pr[X=v]-\Pr[Y=v]|$. \emph{Shannon entropy} of an RV $X$ is denoted by $H(X)=-\sum_x\mathsf{P}_X(x)\log(\mathsf{P}_X(x))$.

The  \emph{min-entropy} of a random variable $X$ with p.m.f. $\mathrm{P}_X$  
is defined as
$H_{\infty}(X)= -\log (\max_{x} (\mathrm{P}_X({x})))$.
The \emph{average conditional min-entropy}  \cite{dodis2004fuzzy} of an RV $X$ given RV $Y$ is 
$\tilde{H}_{\infty}(X|Y)= -\log (\mathbb{E}_{{y} \leftarrow Y}\max_{x}\mathrm{Pr}({X=x}|{Y=y})).
$

We write $[x]_{i\cdots j}$ to denote the block from the $i$th bit to $j$th bit in  $x.$ We use universal hash function on the output of weakly random entropy source, together with a random seed, to generate an output that is close to uniformly distributed,  
%Randomness extractors can be constructed using universal hash families % whose randomness property is given by
as shown by Leftover Hash Lemma~\cite{impagliazzo1989pseudo}. 
%We provide more details on preliminaries in Appendix~\ref{prelimappendix}. 

\begin{definition} [Universal hash family]\label{defn:uhf}
    A family of hash functions $h:\mathcal{X} \times \mathcal{S} \to \mathcal{Y}$ is called a universal hash family if $\forall x_1,x_2 \in \mathcal{X}$, $x_1 \ne x_2 :$  $\pr[h(x_1,s)=h(x_2,s)] \le \frac{1}{|\mathcal{Y}|}$, where the probability is over the uniform choices of $s$ from $\mathcal{S}$.
\end{definition}

\vspace{-.5em}
\begin{definition}[Strong universal hash family]\label{defn:suhash}
A family of hash functions $h: \mathcal{X} \times \mathcal{S} \rightarrow \mathcal{Y}$ is called a strong universal hash family if $\forall x_1,x_2 \in \mathcal{X}$, $x_1 \ne x_2$, and for any $c,d \in \mathcal{Y} :$ $\pr[h(x_1,s)=c \wedge h(x_2,s)=d]=\frac{1}{|\mathcal{Y}|^2}$, where the probability is  over the uniform choices of $s$ from $\mathcal{S}$.  
\end{definition}

We %utilize
use a variant of Leftover Hash Lemma, called Generalized Leftover Hash Lemma [\cite{DodisORS08}, Lemma 2.4] 
 that includes side information about the hash input, to prove security properties of our construction.

 \begin{lemma}[Generalized Leftover Hash Lemma~\cite{DodisORS08}]
 \label{glhl}
Assume a universal hash family $h: \mathcal{X} \times \mathcal{S} \rightarrow \{0,1\}^{\ell}$. Then for any two random variables $A$ and $B$, defined over  $\mathcal{X}$ and $\mathcal{Y}$ respectively, applying $h$ on $A$ can extract a uniform random variable of length $\ell$ satisfying: \\  $\Delta(h(A, S), S, B; U_\ell, S, B)\le \frac{1}{2}\sqrt{2^{-\tilde{H}_{\infty}(A|B)}\cdot 2^
\ell}$, where $S$ is chosen randomly from $\mathcal{S}$. 
%In particular, if $h$ is a universal hash family and $\ell \leq \alpha + 2 - 2 \log{\frac{1}{\epsilon}}$, then $h$ is an  $(n,\alpha, \ell,\epsilon)$-extractor, where $A$ is an $n$-bit string.
 \end{lemma}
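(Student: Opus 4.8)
The plan is to follow the standard collision-probability route to the Leftover Hash Lemma, adapted to carry the side information $B$ through the argument. The central tool is the classical inequality relating statistical distance to the \emph{collision probability} $\mathrm{Col}(W) = \sum_w \Pr[W=w]^2$: for any random variable $W$ supported on a set of size $N$, one has $\Delta(W; U_N) \le \frac{1}{2}\sqrt{N\cdot\mathrm{Col}(W) - 1}$, where $U_N$ is uniform on that set. This follows from Cauchy--Schwarz applied to the $\ell_1$ and $\ell_2$ norms of the probability vector, using $\|W - U_N\|_2^2 = \mathrm{Col}(W) - 1/N$. I would state this bound first, since everything else reduces to estimating a collision probability.

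First I would fix a value $y$ of $B$ and compute the collision probability of the pair $(h(A,S),S)$ conditioned on $B=y$. Taking two independent copies $(A,S)$ and $(A',S')$ of this conditional experiment, a collision of the pair requires $S=S'$ together with $h(A,S)=h(A',S)$. Separating the cases $A=A'$ and $A\neq A'$, and invoking universality of $h$ (Definition~\ref{defn:uhf}) to bound $\Pr_S[h(a,S)=h(a',S)] \le 2^{-\ell}$ for distinct inputs $a\neq a'$, a short computation gives $\mathrm{Col}(h(A,S),S \mid B=y) \le \frac{1}{|\mathcal{S}|}\bigl(\mathrm{Col}(A\mid B=y) + 2^{-\ell}\bigr)$, where the factor $1/|\mathcal{S}|$ comes from the requirement $S=S'$ on the uniform seed.

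Next I would feed this into the distance-collision bound with $N = 2^{\ell}\,|\mathcal{S}|$, the size of the range of $(h(A,S),S)$, whose target uniform distribution is the product $(U_\ell, S)$. The constant $-1$ in the bound exactly cancels the contribution of the $2^{-\ell}$ term, leaving $\Delta\bigl((h(A,S),S\mid B=y);\,(U_\ell,S)\bigr) \le \frac{1}{2}\sqrt{2^{\ell}\,\mathrm{Col}(A\mid B=y)}$. Finally I would average over $y \leftarrow B$, using that the statistical distance with side information is the $B$-average of these conditional distances, and apply Jensen's inequality to pull the expectation inside the concave square root. Combining this with $\mathrm{Col}(A\mid B=y) = \sum_x \Pr[A=x\mid B=y]^2 \le \max_x \Pr[A=x\mid B=y]$ and the definition of average conditional min-entropy yields $\mathbb{E}_{y\leftarrow B}[\mathrm{Col}(A\mid B=y)] \le 2^{-\tilde{H}_{\infty}(A\mid B)}$, which gives the claimed bound $\frac{1}{2}\sqrt{2^{-\tilde{H}_{\infty}(A\mid B)}\cdot 2^{\ell}}$.

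The step needing the most care is the interplay between conditioning on $B$ and the square root: the collision estimate is proved pointwise in $y$, whereas the target inequality carries a single square root over an averaged quantity, so the Jensen step is where the argument would break if the order of averaging and the root were mishandled. By contrast, the universality estimate is routine once the collision event is correctly decomposed into the $A=A'$ and $A\neq A'$ cases, and the passage from collision probability to min-entropy is immediate from $\mathrm{Col}\le\max$.
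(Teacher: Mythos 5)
Your proof is correct. The paper does not prove this lemma itself --- it imports it verbatim from \cite{DodisORS08} (Lemma~2.4) --- and your argument is precisely the standard collision-probability proof used there: the $\ell_1$--$\ell_2$ bound $\Delta(W;U_N)\le\frac{1}{2}\sqrt{N\cdot\mathrm{Col}(W)-1}$, the universality estimate $\mathrm{Col}(h(A,S),S\mid B=y)\le\frac{1}{|\mathcal{S}|}\bigl(\mathrm{Col}(A\mid B=y)+2^{-\ell}\bigr)$, the exact cancellation of the $2^{-\ell}$ term against the $-1$, and the Jensen step to move the average over $y\leftarrow B$ inside the square root before bounding $\mathbb{E}_{y}\bigl[\mathrm{Col}(A\mid B=y)\bigr]$ by $2^{-\tilde{H}_{\infty}(A|B)}$.
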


We now recall the definition of almost strong universal hash family~\cite{Stinson94}.
\vspace{-.5em}
\begin{definition}[Almost strong universal hash family~\cite{Stinson94}]\label{defn:suhash1}
A family of hash functions $h: \mathcal{X} \times \mathcal{S} \rightarrow \mathcal{Y}$ is called  $\epsilon$-almost strong universal hash family if $\forall x_1,x_2 \in \mathcal{X}$, $x_1 \ne x_2$, and for any $c,d \in \mathcal{Y}$, it holds that : (a) $\pr[h(x_1,s)=c]=\frac{1}{|\mathcal{Y}|}$ and  (b) $\pr[h(x_1,s)=c \wedge h(x_2,s)=d] \leq \frac{\epsilon}{|\mathcal{Y}|}$, where the probability is  over the uniform choices of $s$ from $\mathcal{S}$.  
\end{definition}
 
 The notion of {\em fuzzy min-entropy}  has been introduced in \cite{Fuller2020fuzzy} to estimate the  guessing probability of a value within distance $t$ of a sample value $x$ of a distribution $P_{\X}$.  In~\cite{Fuller2020fuzzy}, Fuller et al. used fuzzy min-entropy to compute length of the extracted key in presence of passive adversaries. We consider active adversaries who try to guess a point around a secret key. We use fuzzy min-entropy to compute the probability that an active adversary can correctly guess that point. We define fuzzy min-entropy of a sample value $\x$ corresponding to a joint distribution $P_{\X\Y}$. The adversary tries to guess $\x_1$ such that the inequality:  $-\log(P_{\X |\Y }(\x_1 |\y )) \le \nu$ holds, where $\y$ in the secret key of Bob, and $\nu$ is some predetermined value. That is, the adversary tries to guess $\x_1$ such that $P_{\X |\Y }(\x_1 |\y ) \ge 2^{-\nu}$. To have the maximum chance that the inequality $P_{\X |\Y }(\x_1 |\y ) \ge 2^{-\nu}$ holds, the adversary would choose the point $\x_1$ that maximizes the total probability mass of $\Y$ within the set $\{\y : P_{\X |\Y }(\x_1 |\y ) \ge 2^{-\nu}\}$. 
\begin{comment}
The {\it $\nu$-fuzzy min-entropy} of an RV $X$ with probability distribution $P_{X}$ is defined as 
% 
$H_{\nu,\infty}^{\mathsf{fuzz}}(X)=-\log\big[\mathsf{max}_{x'} \Pr(X\in B_t(x'))\big]$
where $B_t(x')$ is the set of point at distance $t$ from $x'$. 
 We use this definition for the distance function $|-\log P_X(x)-(-\log P_X(x'))|$ for two samples $x$ and $x'$. It is easy to see that this function satisfies {\em identity of indiscernibles}, and {\em symmetric} properties.
 \footnote{ We note that the notion of fuzzy entropy was originally defined for a metric space and using it for a space that is equipped with a distance function implies that some of the inequalities that are derived for this entropy may not hold. }
The conditional fuzzy min-entropy is defined similar to conditional min-entropy, as below.

 For a joint distribution $P_{XZ},$ the {\it $\nu$-conditional fuzzy min-entropy} of $X$ given $Z$ is defined as
{\scriptsize
\begin{align}\nonumber
\tilde{H}_{\nu,\infty}^{\mathsf{fuzz}}(X|Z)=-\log\Big(\underset{z \leftarrow Z}{\mathbb{E}}\max_{x'}\sum_{x:|-\log P_{X|Y}(x|y) +\log(P_{X|Y}(x'|y))| \le \nu}\pr[x|z]\Big).
\end{align}
}
*** fuzzy entropy is defined for "metric" space  - we are using it for a space with a "distance" function - need to check\\
SOmnath: if you agree with above writing, remove yours***
\end{comment}
%{\small
%$H_{\nu,\infty}^{\mathsf{fuzz}}(X)=-%\log\big(\mathsf{max}_{\x} \sum_{\y:-%\log (P_{\X|\Y}(\x|\y)) \le \nu}\pr[\Y= \y]\big).$
%}
The {\it $\nu$-fuzzy min-entropy}~\cite{Fuller2020fuzzy} of an RV $\X$ with joint distribution $P_{\X \Y}$ is defined as \\

{\small
$H_{\nu,\infty}^{\mathsf{fuzz}}(\X)=-\log\big(\mathsf{max}_{\x} \sum_{\y:-\log (P_{\X|\Y}(\x|\y)) \le \nu}\pr[\Y= \y]\big).$
}
For a joint distribution $P_{\X\Y\Z}$, the {\it $\nu$-conditional fuzzy min-entropy} of $\X$ given $\Z$ is defined as
{\scriptsize
\begin{align}\nonumber
\tilde{H}_{\nu,\infty}^{\mathsf{fuzz}}(\X|\Z)=-\log\Big(\underset{\z \leftarrow \Z}{\mathbb{E}}\max_{\x}\sum_{\y:-\log (P_{\X|\Y}(\x|\y)) \le \nu}\pr[\Y=\y|\Z=\z]\Big).
\end{align}
}

 The following lemma gives both lower and upper bounds of $H_{\nu,\infty}^{\mathsf{fuzz}}(\X)$. 
%We provide the lemma~\ref{fuzzyent} and lemma~\ref{fuzzyentcor} for independent interest. We do not use them in the security proof.
 \begin{lemma}\label{fuzzyent}
Let the joint distribution of two RVs $\X$ and $\Y$ be denoted as $P_{\X\Y}$. Then the following properties hold.

(i) $H_\infty(\X) - \nu \le H_{\nu,\infty}^{\mathsf{fuzz}}(\X).$

(ii) Let $\max_{\y}P(\y)=P(\y_{\max})$ for some point $\y_{\max}$ in the domain of $\Y$, and let there exist a point $\x_{\y_{\max}}$ in the domain of $\X$ such that $-\log(P_{\X|\Y}(\x_{\y_{\max}}|\y_{\max})) \le \nu$, then $H_{\nu,\infty}^{\mathsf{fuzz}}(\X) \le H_\infty(\Y)$.
%For any joint distribution $P_{\X\Y}$ of RVs $\X$ and $\Y$,  $H_\infty(X) - \nu \le H_{\nu,\infty}^{\mathsf{fuzz}}(\X) \le H_\infty(Y)$.
 \end{lemma}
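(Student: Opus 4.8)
The plan is to reduce both inequalities to statements about the single quantity $M := \max_{\x} \sum_{\y \in T_{\x}} \pr[\Y=\y]$, where $T_{\x} := \{\y : P_{\X|\Y}(\x|\y) \ge 2^{-\nu}\}$ is the level set appearing in the definition (I have rewritten the condition $-\log P_{\X|\Y}(\x|\y) \le \nu$ in its equivalent multiplicative form $P_{\X|\Y}(\x|\y) \ge 2^{-\nu}$). By definition $H_{\nu,\infty}^{\mathsf{fuzz}}(\X) = -\log M$, and since $-\log$ is decreasing, claim (i) is equivalent to $M \le 2^{\nu}\cdot \max_{\x} P_{\X}(\x) = 2^{\nu} \cdot 2^{-H_\infty(\X)}$, while claim (ii) is equivalent to $M \ge \max_{\y} \pr[\Y=\y] = 2^{-H_\infty(\Y)}$.

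For (i), I would fix a maximizer $\x^\ast$ attaining $M$ and work inside the set $T_{\x^\ast}$. For each $\y \in T_{\x^\ast}$, Bayes' rule gives $\pr[\Y=\y] = P_{\X\Y}(\x^\ast,\y)/P_{\X|\Y}(\x^\ast|\y)$, and the defining inequality $P_{\X|\Y}(\x^\ast|\y) \ge 2^{-\nu}$ turns this into $\pr[\Y=\y] \le 2^{\nu} P_{\X\Y}(\x^\ast,\y)$. Summing over $\y \in T_{\x^\ast}$ and then extending the sum to all $\y$ (all terms are nonnegative) bounds $M$ by $2^{\nu}\sum_{\y} P_{\X\Y}(\x^\ast,\y) = 2^{\nu}P_{\X}(\x^\ast) \le 2^{\nu}\max_{\x}P_{\X}(\x)$, which is exactly the required inequality.

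For (ii), the hypothesis is precisely the statement that $\y_{\max} \in T_{\x_{\y_{\max}}}$, i.e. the level set for the specific point $\x_{\y_{\max}}$ contains the most likely value of $\Y$. Since the maximum over $\x$ dominates the value at $\x_{\y_{\max}}$, and that value is a sum of nonnegative terms one of which is $\pr[\Y=\y_{\max}]$, I get $M \ge \sum_{\y \in T_{\x_{\y_{\max}}}} \pr[\Y=\y] \ge \pr[\Y=\y_{\max}] = \max_{\y}\pr[\Y=\y]$. Taking $-\log$ then yields $H_{\nu,\infty}^{\mathsf{fuzz}}(\X) \le H_\infty(\Y)$.

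Neither step presents a genuine obstacle; the only care needed is bookkeeping of the $-\log$ monotonicity, so that the direction of each bound on $M$ matches the claimed entropy inequality, and the single Bayes substitution on the level set in part (i). The existence hypothesis in (ii) is doing all the work: without a point $\x_{\y_{\max}}$ whose level set reaches $\y_{\max}$, the dominant mass $\pr[\Y=\y_{\max}]$ could be excluded from every sum and the upper bound by $H_\infty(\Y)$ could fail, so I would flag that this is exactly where the assumption is indispensable.
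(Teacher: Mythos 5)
Your proposal is correct and follows essentially the same route as the paper's own proof: part (i) uses the same substitution $\pr[\Y=\y]\le 2^{\nu}P_{\X\Y}(\x,\y)$ on the level set of a maximizer, summed and extended to the marginal $P_{\X}(\x^\ast)\le \max_{\x}P_{\X}(\x)$, and part (ii) uses the same observation that the maximum over $\x$ dominates the sum at $\x_{\y_{\max}}$, which in turn contains the term $\pr[\Y=\y_{\max}]$. No substantive difference from the paper's argument.
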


% \textbf{Proof of Lemma~\ref{fuzzyent}}
\begin{proof}
 
 (i) Note that,
{\small
\begin{align} \nonumber
&-\log (P_{\X|\Y}(\x|\y)) \le \nu \implies \frac{P(\x,\y)}{P(\y)} \ge 2^{-\nu} \\\label{eqn:fuzzyminent}
&\implies P(\y) \le 2^{\nu}P(\x,\y).
\end{align}
}

Let $\mathsf{max}_{\x} \sum_{\y:-\log (P_{\X|\Y}(\x|\y)) \le \nu}P(\x,\y)$ occur at a point $\x=\x_{\mathsf{max}}$, then
{\small
\begin{align}\nonumber
&\mathsf{max}_{\x} \sum_{\y:P_{\X|\Y}(\x|\y) \ge 2^{-\nu}}P(\x,\y) \\\label{eqn:fuzzyminentropy2}
&=  \sum_{\y:-P_{\X|\Y}(\x_{\mathsf{max}}|\y) \ge 2^{-\nu}}P(\x_{\mathsf{max}},\y)
\end{align}
}

Now, 

{
 \small
 \begin{align} \nonumber
 &H_{\nu,\infty}^{\mathsf{fuzz}}(\X) \\\nonumber
 &=-\log\big(\mathsf{max}_{\x} \sum_{\y:-\log (P_{\X|\Y}(\x|\y)) \le \nu}\pr[\Y= \y]\big) \\\nonumber
 &=\log\big(\frac{1}{\mathsf{max}_{\x} \sum_{\y:-\log (P_{\X|\Y}(\x|\y)) \le \nu}\pr[\Y= \y]}\big) \\\nonumber
 &\ge \log\big(\frac{1}{\mathsf{max}_{\x} \sum_{\y:-\log (P_{\X|\Y}(\x|\y)) \le \nu}2^{\nu}P(\x,\y)}\big) \\\nonumber
 &\qquad\qquad\text{  (by equation.~\ref{eqn:fuzzyminent})} \\\nonumber
 &=\log\big(\frac{1}{2^{\nu}\mathsf{max}_{\x} \sum_{\y:-\log (P_{\X|\Y}(\x|\y)) \le \nu}P(\x,\y)}\big) \\\nonumber
&=\log\big(\frac{1}{\mathsf{max}_{\x} \sum_{\y:-\log (P_{\X|\Y}(\x|\y)) \le \nu}P(\x,\y)}\big) - \nu \\\nonumber
&=\log\big(\frac{1}{\sum_{\y:P_{\X|\Y}(\x_{\mathsf{max}}|\y) \ge 2^{-\nu}}P(\x_{\mathsf{max}},\y)}\big) - \nu \text{  (by equation.~\ref{eqn:fuzzyminentropy2})} \\\nonumber
&\ge \log\big(\frac{1}{P(\x_{\mathsf{max}})}\big) - \nu \\\nonumber
& \ge \log\big(\frac{1}{\mathsf{max}_{\x}P(\x)}\big) -\nu  \\\label{eqn:fuzzyenteqn1}
&= H_{\infty}(\X) -\nu
 \end{align}
}

(ii) Let $\mathsf{max}_{\x} \sum_{\y:-\log (P_{\X|\Y}(\x|\y)) \le \nu}\pr[\Y= \y]$ occur at a point $\x=\hat{\x}_{\max}$, then 
{\small
\begin{align} \nonumber
&\mathsf{max}_{\x} \sum_{\y:P_{\X|\Y}(\x|\y) \ge 2^{-\nu}}\pr[\Y= \y] \\\nonumber
&=\sum_{\y:P_{\X|\Y}(\hat{\x}_{\max}|\y) \ge 2^{-\nu}}\pr[\Y= \y]    \\\label{eqn:fuzzent2}
&\ge \sum_{\y:P_{\X|\Y}(\x_{\y_{\max}}|\y) \ge 2^{-\nu}}\pr[\Y= \y]
\end{align}
}

{\small
\begin{align} \nonumber
&H_{\nu,\infty}^{\mathsf{fuzz}}(\X)     \\\nonumber
&=\log\big(\frac{1}{\mathsf{max}_{\x} \sum_{\y:-\log (P_{\X|\Y}(\x|\y)) \le \nu}\pr[\Y= \y]}\big) \\\nonumber
& \le \log\big(\frac{1}{\sum_{\y:P_{\X|\Y}(\x_{\y_{\max}}|\y) \ge 2^{-\nu}}\pr[\Y= \y]}\big) \text{  (by equation~\ref{eqn:fuzzent2})} \\\nonumber
& \le \log\big(\frac{1}{P(\y_{\max})}\big) \\\nonumber
&= \log\big(\frac{1}{\max_{\y}P(\y)}\big) \\\label{eqn:fuzzyenty3}
&=H_{\infty}(\Y)  
\end{align}
}
\end{proof}
%$\qed$

As a corollary, we obtain the following lemma.
\begin{lemma}\label{fuzzyentcor}
Let the joint distribution of three RVs $\X$, $\Y$ and $\Z$ be denoted as $P_{\X\Y\Z}$. Then the following properties hold.

(i) $\tilde{H}_\infty(\X|\Z) - \nu \le \tilde{H}_{\nu,\infty}^{\mathsf{fuzz}}(\X|\Z).$

(ii) Let $\max_{\y}P(\y)=P(\y_{\max})$ for some point $\y_{\max}$ in the domain of $\Y$, and let there exist a point $\x_{\y_{\max}}$ in the domain of $\X$ such that $-\log(P_{\X|\Y}(\x_{\y_{\max}}|\y_{\max})) \le \nu$, then $\tilde{H}_{\nu,\infty}^{\mathsf{fuzz}}(\X|\Z) \le \tilde{H}_\infty(\Y|\Z)$.

\end{lemma}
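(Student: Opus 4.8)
The plan is to deduce both inequalities from the unconditional Lemma \ref{fuzzyent} by conditioning on each value $\z$ of $\Z$ and then averaging over $\Z$. Write $B_\nu(\x)=\{\y:-\log(P_{\X|\Y}(\x|\y))\le\nu\}$ for the fuzzy ball, $g(\z)=\max_{\x}\sum_{\y\in B_\nu(\x)}\pr[\Y=\y|\Z=\z]$ for the inner quantity appearing in the definition of $\tilde{H}_{\nu,\infty}^{\mathsf{fuzz}}(\X|\Z)$, and set $m_X(\z)=\max_{\x}\pr[\X=\x|\Z=\z]$ and $m_Y(\z)=\max_{\y}\pr[\Y=\y|\Z=\z]$; then $\tilde{H}_{\nu,\infty}^{\mathsf{fuzz}}(\X|\Z)=-\log(\mathbb{E}_{\z\leftarrow\Z}g(\z))$, $\tilde{H}_\infty(\X|\Z)=-\log(\mathbb{E}_\z m_X(\z))$ and $\tilde{H}_\infty(\Y|\Z)=-\log(\mathbb{E}_\z m_Y(\z))$. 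The whole proof then consists of establishing pointwise-in-$\z$ inequalities between $g(\z)$, $m_X(\z)$ and $m_Y(\z)$ and transporting them through $\mathbb{E}_\z$ and $-\log$.

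For part (i), I would repeat the computation of Lemma \ref{fuzzyent}(i) verbatim with every probability conditioned on $\Z=\z$: for $\y$ in the ball of the maximizing $\x$ one has $\pr[\Y=\y|\Z=\z]\le 2^{\nu}\pr[\X=\x,\Y=\y|\Z=\z]$, and summing over the ball together with $\sum_\y\pr[\X=\x,\Y=\y|\Z=\z]\le\pr[\X=\x|\Z=\z]$ yields the pointwise bound $g(\z)\le 2^{\nu}m_X(\z)$. Monotonicity and linearity of expectation give $\mathbb{E}_\z g(\z)\le 2^{\nu}\mathbb{E}_\z m_X(\z)$, and applying the decreasing map $-\log(\cdot)$ gives $\tilde{H}_{\nu,\infty}^{\mathsf{fuzz}}(\X|\Z)\ge-\log(2^{\nu}\mathbb{E}_\z m_X(\z))=\tilde{H}_\infty(\X|\Z)-\nu$, which is (i).

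For part (ii), I would argue pointwise that $g(\z)\ge m_Y(\z)$. Fix $\z$ and let $\y^\star_\z$ attain $m_Y(\z)$; by the stated hypothesis the ball of $\x_{\y_{\max}}$ contains $\y_{\max}$, and in the correlated-source regime of interest (in particular $\X=\Y$, where one may take $\x=\y$) every $\y$, hence $\y^\star_\z$, lies in some $B_\nu(\x)$. Choosing that $\x$ as a feasible candidate in the maximization defining $g(\z)$ and discarding all terms of the sum but $\y^\star_\z$ gives $g(\z)\ge\pr[\Y=\y^\star_\z|\Z=\z]=m_Y(\z)$. Taking $\mathbb{E}_\z$ and then $-\log(\cdot)$ yields $\tilde{H}_{\nu,\infty}^{\mathsf{fuzz}}(\X|\Z)\le\tilde{H}_\infty(\Y|\Z)$, which is (ii).

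The main obstacle is the interaction between the outer $-\log$ and the expectation over $\Z$: one cannot average the per-$\z$ entropy inequalities directly, since $-\log$ is convex and Jensen would point the wrong way; instead the inequalities must be set up at the level of the linear quantities $\mathbb{E}_\z g(\z)$, $\mathbb{E}_\z m_X(\z)$, $\mathbb{E}_\z m_Y(\z)$ (where the factor $2^{\nu}$ in (i) pulls cleanly out of the expectation) before $-\log$ is applied. A second point requiring care is that the ball in the definition of $\tilde{H}_{\nu,\infty}^{\mathsf{fuzz}}(\X|\Z)$ is specified through the unconditional $P_{\X|\Y}$, whereas the conditional computation in (i) naturally uses $P_{\X|\Y\Z}(\cdot|\cdot,\z)$; one must check that reading the membership condition either way is consistent for the distributions under consideration, and that for (ii) the hypothesis on $\y_{\max}$ transfers to the $\z$-dependent maximizers $\y^\star_\z$, which holds whenever every $\y$ admits a close $\x$, as in the $\X=\Y$ case used in the protocol.
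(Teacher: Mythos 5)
The paper offers no proof of this lemma (it is simply asserted as a corollary of Lemma~\ref{fuzzyent}), so your plan --- condition on $\Z=\z$, rerun the unconditional argument, and compare the linear quantities $\mathbb{E}_{\z}g(\z)$, $\mathbb{E}_{\z}m_X(\z)$, $\mathbb{E}_{\z}m_Y(\z)$ before applying $-\log$ --- is the natural one, and your handling of the convexity issue is correct. However, the difficulty you flag at the end is not a consistency check that can be deferred: for part (i) it is the crux, and the step fails as written. Membership of $\y$ in the ball of $\x$ means $P_{\X\Y}(\x,\y)\ge 2^{-\nu}P_{\Y}(\y)$, which does \emph{not} imply $\pr[\X=\x,\Y=\y|\Z=\z]\ge 2^{-\nu}\pr[\Y=\y|\Z=\z]$ unless $\X$ and $\Z$ are conditionally independent given $\Y$; so your pointwise bound $g(\z)\le 2^{\nu}m_X(\z)$ is unjustified. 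Indeed both it and the averaged inequality can fail: take $\Y$ uniform on $\{1,2\}$; given $\Y=1$ let $\X=a$ with probability $1/2$ and otherwise uniform on $[N]$, given $\Y=2$ let $\X=b$ with probability $1/2$ and otherwise uniform on $[N]$; set $\nu=1$ and let $\Z$ reveal whether $\X\in\{a,b\}$ and, if not, the value of $\Y$. Then $B_1(a)=\{1\}$, $B_1(b)=\{2\}$, $\mathbb{E}_{\z}g(\z)=3/4$ while $2^{\nu}\mathbb{E}_{\z}m_X(\z)=1/2+1/N$, so $\tilde{H}_{\infty}(\X|\Z)-\nu\approx 1$ exceeds $\tilde{H}_{\nu,\infty}^{\mathsf{fuzz}}(\X|\Z)=\log(4/3)$. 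Your proof of (i) therefore has to commit to one of two repairs: either read the ball with $P_{\X|\Y\Z}(\cdot|\cdot,\z)$, in which case your computation goes through verbatim via $g(\z)\le 2^{\nu}\sum_{\y}\pr[\X=\x_{\z},\Y=\y|\Z=\z]=2^{\nu}\pr[\X=\x_{\z}|\Z=\z]\le 2^{\nu}m_X(\z)$, or impose a structural hypothesis such as the Markov chain $\X\to\Y\to\Z$ under which the two readings coincide.

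Part (ii) has a parallel but milder gap. The lemma's hypothesis only places the \emph{global} mode $\y_{\max}$ of $P_{\Y}$ inside some ball, whereas your pointwise bound $g(\z)\ge m_Y(\z)$ needs the \emph{conditional} mode $\y^{\star}_{\z}$ of $P_{\Y|\Z=\z}$ to lie in some ball for every $\z$; you obtain this only by importing the $\X=\Y$ special case, which is an assumption beyond the statement. The weaker route that uses only the stated hypothesis gives $\mathbb{E}_{\z}g(\z)\ge\mathbb{E}_{\z}\pr[\Y=\y_{\max}|\Z=\z]=P(\y_{\max})$ and, separately, $\mathbb{E}_{\z}m_Y(\z)\ge P(\y_{\max})$, which bounds both sides from below by the same quantity and does not compare them. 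So your argument establishes (ii) under a strengthened per-$\z$ hypothesis (automatic when $\X=\Y$, since then every $\y$ lies in $B_{\nu}(\y)$), not under the hypothesis as stated; this strengthening, like the conditional reading of the ball in (i), should be made explicit.
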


 \remove{
 Calligraphic letters are to denote sets. If $\mathcal{S}$ is a set then $|\mathcal{S}|$ denotes its size. $U_{\mathcal{X}}$ denotes a random variable with uniform distribution over ${\mathcal{X}}$ and $U_\ell$ denotes a random variable with uniform distribution over $\{0,1\}^\ell$. {All the logarithms are in base 2.}
	  
	    %Bold-face letters are vectors. ${X}\stackrel{\bigtriangleup}=X^n=(X_1,\ldots,X(\lambda))$  is a vector of  $\lambda$ random variables, and its instantiation is given by % to denote the random vector, and${x}= x^n = (x_1,\ldots,x(\lambda))$ and its corresponding probability distribution is $\mathrm{P}_{ X}$.% denotes the $\lambda$-fold realization. 

A function $\ms F:\mc X\to \mc Y$ maps an element $x\in \mc X$ to an element $y\in \mc Y$. This is denoted by $ y=\ms F(x)$. 
%If $\mc A$ is a probabilistic algorithm, then $\mc A(x_1,x_2,\dots;r)$ is the result of running $\ms A$ on inputs $x_1,x_2,\dots$ and coins $r$. We say that $y$ can be output by $\ms A(x_1,x_2,\dots)$ if there is some $r$ such that $\ms A(x_1,x_2,...;r) = y$.
  We use the symbol  `$\leftarrow$', to
%We use the symbol `$\leftarrow$'‚ to
assign a constant value (on the right-hand side) to a variable (on the left-hand side). Similarly, 
we use, `$\stackrel{\$}\leftarrow$', to assign to a variable either a uniformly
sampled value from a set or the output of a randomized algorithm. 
We denote by $x\stackrel{r}\gets \mathrm{P}_X$ the assignment of a  %fresh 
sample from $\mathrm{P}_X$ to the variable $x$. 
}
\begin{comment}
We 
%write $\msa^{\ms O_1,\ms O_2,...}(.)$ to denote an { algorithm} $\msa$  that has access to oracles $\ms O_1, \ms O_2, . . . $,
denote the algorithm $\msa$ that has access to oracles $\ms O_1, \ms O_2, \dots$, taking inputs $x, y, \cdots$, and generating output $u$ by $u \gets\msa^{\ms O_1,\ms O_2,...}(x, y, \cdots)$.
For $\lambda\in\mathbb{N}$, $1^\lambda$ the unary representation of $\lambda$ %denotes the bit string consisting of $\lambda$  copies of the bit 1 which 
that is used as  input to an algorithm specifies the requirement that the running time of the algorithm is
% 
 a polynomial in $\lambda$. 
\end{comment}
 % means 

 %%%%%%%%%%%%%%%%%%%%%%%%
% \vspace{-1.1em}
\subsection{One-way secret key agreement (OW-SKA)}\label{owskamac}
\vspace{-.2em}
\begin{comment}
{\color{magenta} -REMOVE FROM ISIT

Information-theoretic secret key agreement was first introduced by Maurer~\cite{Maurer1993} and Ahlswede et al.~\cite{Ahlswede1993} independently. Information-theoretic secure one-way secret key agreement (OW-SKA) was first considered by Ahlswede~\cite{Ahlswede1993}. 
%In an OW-SKA protocol, Alice, Bob and Eve are privately given instances of RVs $X$, $Y$, $Z$, respectively, that are correlated according to a public distribution $P_{XYZ}$. 
In OW-SKA protocol, Alice sends a single message to Bob for key establishment. An information-theoretic secure OW-SKA protocol has two main steps~\cite{Brassard}: {\it information reconciliation} to agree on a common string and {\it privacy amplification} to extract a secret key from the common string. 

}
\end{comment}

One natural setting is that the %randomness  generating 
probabilistic experiment that underlies  %triplet 
$(X,Y,Z)$ is repeated $n$ times independently, and Alice, Bob and Eve privately receive realizations of the RVs $\X = (X_1,\cdots,X_n)$, $\Y = (Y_1,\cdots,Y_n)$ and $\Z =(Z_1,\cdots,Z_n)$ respectively, where \\
%\begin{align}\nonumber
$P_{\X \Y \Z }(\x , \y , \z )=P_{\X \Y \Z}(x_1,\cdots,x_n,y_1,\cdots,y_n,z_1,\cdots,z_n)\\=\prod_{i=1}^n P_{XYZ}(x_i,y_i,z_i)$. This setting is considered  in  %assumption is also used in the satellite model of 
Maurer's satellite scenario where a randomly generated string is received by Alice, Bob and Eve over independent noisy channels ~\cite{Maurer1993,Maurer1997authencation}. 
%Notice 
We note that for all $i=1,\cdots,n,$, the RVs $(X_i, Y_i, Z_i)$
%(in binary case defined over $\{0,1\}\times\{0,1\}\times \{0,1\}$ )
%have identical joint distribution 
%
have the underlying distribution $P_{X Y Z}$ and $P_{X_i Y_i Z_i} (x,y,z)=P_{X Y Z}(x,y,z)$, $\forall (x,y,z) \in {\cal X}\times{\cal Y} \times {\cal Z}$.
%for all $i \in \{1,\cdots,n\}$.

%i.e.
%$P_{X_i Y_i Z_i} (x_i,y_i,z_i)=P_{X Y Z}(x_i,y_i,z_i)$, $\forall i \in \{1,\cdots,n\}$.   

\vspace{-.5em}
\begin{definition}[Secure OW-SKA protocol~\cite{holenstein2005one}]\label{owska}
Let $X$ and $Y$ be two RVs over $\mathcal{X}$ and $\mathcal{Y}$ respectively. For shared key length $\ell$, a OW-SKA protocol on $\mathcal{X} \times \mathcal{Y}$ consists of two function families: a (probabilistic) function family \{$\tau_{\mathrm{Alice}}:\mathcal{X}^n \to \{0,1\}^\ell  \times \mathcal{C}$\} that outputs $k_A \in \{0,1\}^\ell$ and $c$; and a function family $\{\tau_{\mathrm{Bob}}: \mathcal{Y}^n \times \mathcal{C} \to \{0,1\}^\ell\}$ that outputs $k_B \in \{0,1\}^\ell$. A OW-SKA protocol on $\mathcal{X} \times \mathcal{Y}$ is secure for a probability distribution $P_{XYZ}$ over $\mathcal{X} \times \mathcal{Y} \times \mathcal{Z}$ if for $\ell \in \mathbb{N}$, the OW-SKA protocol establishes an  $(\epsilon,\sigma)$-secret key %(i.e. $(\epsilon,\sigma)$-SK)
$k$ satisfying the following properties:

(i) (reliability)\quad $\pr(K_A=K_B=K) \ge 1 - \epsilon$

(ii) (security) \quad $\Delta(K,C,\Z;U_\ell,C,\Z) \le \sigma$,

where $K_A$, $K_B$, $K$, $C$ are RVs corresponding to $k_A$, $k_B$, $k$ and $c$ respectively, and the RV $\Z$ is the vector of $n$ instances of the RV $Z$ and is Eve's side information.  
%$n=n(\lambda,\ell)$. 
The RV $U_\ell$ is uniformly distributed over $\{0,1\}^\ell$.
%, and the functions $\epsilon(\cdot),\sigma(\cdot) \in SMALL$.   
\end{definition}

%We  use almost strong universal hash function to construct an information-theoretic one-time MAC. ??? 
 %{\color{red} USE WOLF'S DEFINITION - need secuity against impersonation and substitution - your definition is just a fucntion!!}
%Following [~\cite{cramer2008detection}, Definition 5], we now 
{\color{red}

\begin{comment}
{\color{magenta} -REMOVE FROM ISIT

We define information-theoretic one-time MAC in correlated randomness setting  defined by \pxyz where $X$ and $Y$ serves as secret keys of Alice and Bob that are partially leaked to Eve through $Z$. We 
%We assume that the key of this MAC is private, and the message and the tag are public. 
The two types of attacks that are considered for traditional MACs where Alice and Bob share a secret key \cite{Simmons}, are possible
%~\cite{MaurerW03b}:
{\it{impersonation attack}} and {\it{substitution attack}}. In an {\it{impersonation attack}}, the adversary's goal is to construct a message and tag pair that is accepted by the verification algorithm. In  a {\it{substitution attack}}, the adversary constructs the fraudulent message and tag pair after observing  a an authenticated message.
In the following, these attacks are defined for the correlated randomness setting.
}
\end{comment}
%and then tries to adversary tries to generate a correctly authenticated message, and in a {\it{substitution attack}}, an adversary observes a correctly authenticated message and then tries to replace it with a different correctly authenticated message. 
%The adversary's sucees probabilities in these two attacks are denoted by $\delta_{imp}$ and $\delta$   
}
\vspace{-1em}
\begin{definition}(Secure information-theoretic one-time MAC in correlated randomness setting) \label{defn:mac}

%cf.~\cite{cramer2008detection})
%Assume Alice, Bob and Eve have private samples of 
Let $P_{\X \Y \Z} $ be a public distribution.
 An $(|\mathcal{S}|,P_{\X\Y\Z},|\mathcal{T'}|,\delta_{mac})$-information-theoretic one-time message authentication code is a triple of algorithms $(\mathsf{gen},\mathsf{mac},\mathsf{ver})$ where
 $\mathsf{gen}: P_{\X \Y \Z} \to (\x\y\z)$  samples $ P_{\X \Y \Z}$ and privately gives $\x$ and $\y$  to Alice and Bob, and leaks $\z$ to Eve, 
$\mathsf{mac}: \mathcal{X}^n \times \mathcal{S} \to \mathcal{T'}$ is the tag generation algorithm that maps an input message from the message set $\mathcal{S}$  
to a tag in the tag set $\mathcal{T'}$,
using the private input of Alice, $\x $, and
$\mathsf{ver}: \mathcal{Y}^n \times \mathcal{S} \times \mathcal{T'} \to \{acc, rej\}$ takes a message and tag pair, and outputs either accept (acc) or reject (rej)  using  Bob's private input $\y $. 

The MAC satisfies correctness and unforgeability properties defined as follows.\\
{\em Correctness }: For any choice of $s \in \mathcal{S}$, we have,
%$\x \in \mathcal{X}^n$, $s \in \mathcal{S}$,
%
\vspace{-.3em}
 \begin{equation} 
 \footnotesize
\pr[ (\x\y\z)  \leftarrow \mathsf{gen}(P_{\X\Y\Z}),\mathsf{ver}(\y ,s,\mathsf{mac}(\x ,s))=acc \text{ $|$ }\Z=\z] =1
\vspace{-.3em}
 \end{equation}
 %\vspace{-.6em}
 {\em $\delta_{ot}$-Unforgeability (one-time unforgeability):} 
For any $(\x\y\z)  \leftarrow \mathsf{gen}(P_{\X\Y\Z})$, we consider protection against two types of attacks,  \\
(i) $\delta_{imp}$-impersonation: for any message and tag pair $s'  \in \mathcal{S}$ and $t'  \in \mathcal{T'}$ chosen by the adversary, the following holds. 
\vspace{-.4em}
{\small
 \begin{equation}
  \pr[ \mathsf{ver}(\y,s',t')=acc \text{ $|$ } \Z=\z] \le \delta_{imp}, \\
  \vspace{-.3em}
 \end{equation}
 }
(ii) $\delta_{sub}$-substitution: for any observed message and tag pair $(s,t)$, for any adversary choice of $s' \ne s  \in \mathcal{S}$ and $t', t  \in \mathcal{T'}$, the following holds.
\vspace{-.4em}
{\small
\begin{equation}
  \pr[ \mathsf{ver}(\y,s',t')=acc\text{ $|$ }
 s, \mathsf{mac}(\x ,s)=t, \Z =\z] \le \delta_{sub}, \\
 \vspace{-.3em}
 \end{equation}
 }
The MAC is called $\delta_{ot}$-one-time unforgeable with $\delta_{ot}=\mathsf{max}\{\delta_{imp},\delta_{sub}\}$, where the probability is taken over the randomness of  $P_{\X\Y\Z}$.
%choices of $\x $ in %$\mathcal{X}^n$.

\end{definition}
\vspace{-.5em}
A special case of the above definition is when $\X=\Y$ and $P_{\X\Y\Z} = P_{\X\Z}$, where $\X$ is the shared key of Alice and Bob that is partially leaked through $\Z$ to Eve.

Maurer's  construction in \cite{Maurer1997authencation} is an example of the general case one-time MAC, while the construction in \cite{MaurerW03b} is an example of the  latter case $\X=\Y$ with partially leaked $\X$.

In the following we define robustness of a secure \ska.
\begin{comment}
{\color{magenta} -REMOVE FROM ISIT
 %against active adversaries.
In \cite{Wolf98} \ska with two types of robustness is considered. Informally, {\em strong robustness } requires that with a high probability, Alice and Bob both accept, or both reject. In {\em weak  robustness } however require either Alice or Bob reject, or the protocol establishes a secret shared key.
As noted in \cite{Wolf98} it is easy to see that a OW-SKA can only achieve weak robustness. 
}
\end{comment}
We  follow the definition in \cite{Wolf98}.
\vspace{-.5em}
\begin{definition}[Robustness of secure OW-SKA protocol]
%~{Wolf98,dodis2006robust}
\label{robustowska} 
 A %$(\epsilon,\sigma)$ 
 OW-SKA protocol $(\tau_{\mathrm{Alice}}, \tau_{\mathrm{Bob}})$ is called $(\epsilon,\sigma)$ OW-\ska with robustness $\delta$ 
 %is $\delta$-robust 
 if for any strategy of an active attacker with access to $Z$ and communicated messages  over the public channel, the probability that either Bob rejects the outcome of the protocol or the secret key agreement protocol is successful with reliability parameter $\epsilon$ and key security parameter $\sigma$, is no less than $(1-\delta)$. An  $(\epsilon,\sigma)$ OW-\ska protocol has robustness $\delta$ if for all adversary $\mathcal{D}$, the probability that the following experiment outputs `success' is at most $\delta$: sample $(\x,\y,\z)$ according to the distribution $P_{\X \Y \Z}$; let $(k_A,c) \leftarrow \tau_{\mathrm{Alice}}(\x)$, and let $\Tilde{c} \leftarrow \mathcal{D}(\z, c)$; output  `success' if $\Tilde{c} \ne c$ and $\tau_{\mathrm{Bob}}(\y,\tilde{c}) \ne \perp$.
 
\end{definition}

\remove{
++++++++++++++++
\begin{definition}[Robustness of secure OW-SKA protocol~\cite{Wolf98,dodis2006robust}]\label{robustowska} {\color{red} USE THE ONE THAT MATCHES}
 An $(\epsilon,\sigma)$-SK OW-SKA protocol has robustness $\delta$ if for every possible strategy of Eve, the probability that either Bob rejects the outcome of the protocol or the secret key agreement protocol is successful is no less than $(1-\delta)$. More precisely, an  $(\epsilon,\sigma)$-SK OW-SKA protocol has robustness $\delta$ if for all adversary $\mathcal{D}$, the probability that the following experiment outputs `success' is at most $\delta$: sample $(\x,\y,\z)$ according to the distribution $P_{\X \Y \Z}$, let $(k_A,c) \leftarrow \tau_{\mathrm{Alice}}(\x)$ and let $\Tilde{c} \leftarrow \mathcal{D}(\z, c)$; output  `success' if $\Tilde{c} \ne c$ and $\tau_{\mathrm{Bob}}(\y,\tilde{c}) \ne \perp$.
 
 %, where $(\X , \Y ,\Z )$ are generated due to $n$ independent experiments generating $(X, Y, Z)$ according to the distribution $P_{XYZ}$ i.e. $\X = (X_1,\cdots,X_n)$, $\Y = (Y_1,\cdots,Y_n)$ and  $\Z = (Z_1,\cdots,Z_n)$ and \\ $P_{\X \Y \Z}(\x , \y, \z )=P_{\X \Y \Z}(x_1,\cdots,x_n,y_1,\cdots,y_n,z_1,\cdots,z_n)\\=\prod_{i=1}^n P_{XYZ}(x_i,y_i,z_i)$. $\x , \y , \z$ are any realizations of the RVs $\X , \Y , \Z $ respectively.        
    
\end{definition}    
}
%\vspace{-1em}
\section{Design and analysis of a OW-\ska protocol}
\label{sec:ikem-inst}
%\vspace{-0.3em}
We give the construction of a robust and secure OW-\ska
%OW-SKA 
protocol, and %We then
prove its security and robustness.
%properties.
Our construction provides security against active adversary.

%\vspace{-.5em}
\subsection{A robust and secure OW-\ska %OW-SKA protocol construction
}
\label{constructionikem}
%\vspace{-.3em}
%We propose a fuzzy extractor that is information-theoretic secure.

%An IND-$q_e$-CEA secure iKEM can be used to construct an IND-$q_e$-CPA secure hybrid encryption. In this notion of security, the adversary has access to encryption queries, and iKEM with IND-$q_e$-CEA security allows adversarial access to its encapsulation oracle. 
A secure OW-SKA protocol that satisfies Definition~\ref{owska} provides security against passive adversary in which the adversary observes the message $c$ and tries to learn something about the extracted key $k$. However, it does not say anything about what happens if an adversary can modify a message $c$ as it is sent to Bob over public channel. In particular, the definition does not say anything about the output of $\tau_{\mathrm{Bob}}(\y,\tilde{c})$ if $\tilde{c} \ne c$.  A robust and secure \ska %OW-SKA that
satisfies Definition~\ref{robustowska} and guarantees  that any tampering with $c$ will be either detected by Bob, or does not affect  a shared secret key establishment between Alice and Bob.
%against such active attacks. 
%a stronger notion of security will be when the secure OW-SKA also satisfies the robustness property as defined in Definition~\ref{robustowska}. A robust and secure OW-SKA protocol is secure against active attacks. 
We build on %an existing 
the secure \skp %OW-SKA 
protocol~\cite{sharif2020} and modify it to provide security and robustness.
%a robust and secure OW-SKA protocol that is secure against active adversary. 
%The communication cost (i.e. size of ciphertext) of the new  construction is also the same as the IND-$q_e$-CEA secure iKEM.    

\begin{comment}
In order to derive interesting results, we focus our attention to specific types of probability distribution of random variables given to Alice, Bob and Eve. 

One natural assumption is that the random experiment generating the triplet $(X,Y,Z)$ is repeated $n$ times independently. We asume that Alice, Bob and Eve privately receive any instances of the RVs $\X = (X_1,\cdots,X_n)$, $\Y = (Y_1,\cdots,Y_n)$ and $\Z =(Z_1,\cdots,Z_n)$ respectively, where \\
%\begin{align}\nonumber
$P_{\X \Y \Z }(\x , \y , \z )=P_{\X \Y \Z}(x_1,\cdots,x_n,y_1,\cdots,y_n,z_1,\cdots,z_n)\\=\prod_{i=1}^n P_{XYZ}(x_i,y_i,z_i)$. This assumption is also used in the satellite model of Maurer~\cite{Maurer1993,Maurer1997authencation}. 
Notice that the RVs $X_i, Y_i, Z_i$ defined over \{0,1\} have identical joint distribution for all $i=1,\cdots,n$ i.e. $P_{X_i Y_i Z_i} (x_i,y_i,z_i)=P_{X Y Z}(x_i,y_i,z_i)$, $\forall i \in \{1,\cdots,n\}$.   
\end{comment}

%One natural assumption is that Alice, Bob and Eve are given 
%In our robust secure OW-SKA 
%protocol construction below, 
The protocol uses %we consider
two hash function families: a strong universal hash family $ h': \mathcal{X}^n \times \mathcal{S'} \rightarrow \{0,1\}^\ell $ and 
%Let $\mathcal{S}=\{\x \parallel s': \x  \in \mathcal{X}, s' \in \mathcal{S'}\}$, where `$\parallel$' denotes the concatenate operation. 
a universal hash family $ h: \mathcal{X}^n \times (\mathcal{S} \times \mathcal{S'})  \rightarrow \{0,1\}^t $ that are used to extract the key and construct the protocol message $c$, respectively.
$h$ is also an almost strong universal hash family when the probability is taken over $\mathcal{X}^n$. The message  and key domains are 
%of \ska %OW-SKA 
%message and key are
the sets $\mathcal{C} = \{0,1\}^t \times \mathcal{S} \times \mathcal{S'}$ and $\mathcal{K} = \{0,1\}^\ell $,
respectively. 
\vspace{-.3em}
\begin{construction}[A robust and secure OW-\ska] 
%OW-SKA protocol
%
\label{owska:robust}
%We define an OW-SKA 

The \ska protocol $\mathsf{OWSKA}_{a}=(\ikemg,\ikeme,\ikemd)$ is given as follows: 

\begin{comment}
 Let the joint  distribution of three random variables $\X $, $\Y $ and $\Z $ be $P_{\X \Y \Z }=\prod_{i=1}^nP'_{X_i Y_i Z_i}$ due to $n$ independent experiments generating $XYZ$, where $\X =(X_1,\cdots,X_n)$, $\Y =(Y_1,\cdots,Y_n)$, $\Z =(Z_1,\cdots,Z_n)$, the RVs $X,Y,Z,X_i, Y_i, Z_i$ are defined over $\{0,1\}$ and $P'_{X_i Y_i Z_i}=P'_{X Y Z}$, $\forall i \in \{1,\cdots,n\}$. The correlated random samples of $\X , \Y , \Z  \in {\mathcal{V}}^n$ are generated according to the joint distribution $P_{\X \Y \Z }$, $\mathcal{V} \in \{0,1\}$. Let $ h': \mathcal{X} \times \mathcal{S'} \rightarrow \{0,1\}^\ell $ and 
%Let $\mathcal{S}=\{\x \parallel s': \x  \in \mathcal{X}, s' \in \mathcal{S'}\}$, where `$\parallel$' denotes the concatenate operation. 
$ h: \mathcal{X'} \times (\mathcal{S} \times \mathcal{S'})  \rightarrow \{0,1\}^t $ be two strong universal hash families. The domains of messages and keys are the sets $\mathcal{C} = \{0,1\}^t \times \mathcal{S} \times \mathcal{S'}$ and $\mathcal{K} = \{0,1\}^\ell $ respectively. 
\end{comment}
The \ska % OW-SKA 
protocol $\mathsf{OWSKA}_{a}$ has three algorithms, $(\ikemg,\ikeme,\ikemd)$, that are given in Algorithm~\ref{alg:iKGen}, Algorithm~\ref{alg:iK.Enc} and Algorithm~\ref{alg:iK.dec}, respectively. The parameter $\nu$ in Algorithm~\ref{alg:iK.dec} relies on the correlation between the random variables $\X $ and $\Y $. Higher correlation between $\X $ and $\Y $ implies  smaller value of $\nu$ and smaller number of elements in the set $\mathcal{R}$. The relationship among parameters is given by both  Theorem~\ref{Thm:ikemotsecurity} and Theorem~\ref{mac2:ctxt}.
\end{construction}

%\begin{minipage}{0.45\textwidth}
%\SetAlCapHSkip{0em}
\vspace{-1.5em}
 	\begin{algorithm}[!ht]
 	\footnotesize
 	%\begin{center}
 	%\begin{algorithm}[H]
 		\SetAlgoLined
 		%\SetAlCapHSkip{0em}
 		\DontPrintSemicolon
 		\SetKwInOut{Input}{Input}\SetKwInOut{Output}{Output}
 		\Input{A public distribution $P_{\X \Y \Z }$}
 		\Output{$(\x ,\y ,\z )$}
 		\SetKw{KwBy}{by}
 		\SetKwBlock{Beginn}{beginn}{ende}
 		%\Begin{
 		1. A trusted sampler samples the given public distribution $P_{\X \Y \Z }$ to generate $(\x ,\y ,\z ) \xleftarrow{\$} P_{\X \Y \Z }$ and provides them privately to Alice, Bob and Eve respectively. \\
         %   2. Find the security parameter $\lambda$, parameters $t$, $\ell$, $\nu$ and share them publicly with Alice and Bob. \\
 		%} 
 		\vspace{-1em}
 		\parbox{\linewidth}{\caption{\footnotesize $\ikemg(P_{\X \Y \Z })$}}  
 		\label{alg:iKGen}
 	%	\end{center}
 	\end{algorithm} 
 %\end{minipage}
%\hfill
%\begin{minipage}{0.29\textwidth}
\vspace{-2em}
 	\begin{algorithm}[!ht]
 	\footnotesize
 	%\begin{algorithm}[H]
 		\SetAlgoLined
 		\DontPrintSemicolon
 		\SetKwInOut{Input}{Input}\SetKwInOut{Output}{Output}
 		\Input{$\x $}
 		\Output{extracted key= $k$ and message=$c$}
 		\SetKw{KwBy}{by}
 		\SetKwBlock{Beginn}{beginn}{ende}
 		%\Begin{
 		        1. Randomly sample seed $s' \xleftarrow{\$} \mathcal{S'}$ for $h'(\cdot)$ \\
 		        2. Randomly sample seed $s \xleftarrow{\$} \mathcal{S}$ for $h(\cdot)$\\
 		        3. $k$ = $h'(\x ,s')$ \\
 		        4. $c$ = $(h(\x , (s',s)),s',s)$ \\
 		        5. Output = $(k,c)$ \\
 		%} 
 		\vspace{-1em}
 		\parbox{\linewidth}{\caption{\footnotesize $\ikeme(\x )$}} 
 		\label{alg:iK.Enc}
 	\end{algorithm} 
%\end{minipage}
%\begin{minipage}{0.55\textwidth}
\vspace{-2em}
 	\begin{algorithm}[!ht]
 	\footnotesize
 	%\begin{algorithm}[H]
 		\SetAlgoLined
 		\DontPrintSemicolon
 		\SetKwInOut{Input}{Input}\SetKwInOut{Output}{Output}
 		\Input{$\y $  and message $c$}
 		\Output{Either an extracted key $k$  or $\perp$}
 		\SetKw{KwBy}{by}
 		\SetKwBlock{Beginn}{beginn}{ende}
 		%\Begin{
 		        1. Parse message $c$ as $(d,s',s)$, where $d$ is a $t$-bit string \\
 		        \vspace{-1.6em}
 		        \begin{flalign}\label{reconset}
 		        %\vspace{-5em}
 		        \text{2. Consider the set }\mathcal{R} =\{\x :-\log(P_{\X |\Y }(\x |\y )) \le \nu \} &&
 		        \end{flalign}
 		        3. For every $\hat{\x } \in \mathcal{R}$, Bob verifies whether $d=h(\hat{\x }, (s',s))$ \\
 		         4. \eIf{there is a unique $\hat{\x } \in \mathcal{R}$ satisfying  $d=h(\hat{\x }, (s',s))$}{
 		                   Output $k=h'(\hat{\x },s')$ \\
 		                   }
 		                   {
 		                   Output $\perp$ \\
 		                   }
 		%} 
 		\vspace{-.5em}
 		\parbox{\linewidth}{\caption{\footnotesize $\ikemd(\y ,c)$}} 
 		\label{alg:iK.dec}
 	\end{algorithm} 
\subsection{Relation with ~\cite{sharif2020}}
\label{securityfuzzyext}
 \vspace{-.3em}
 Our construction~\ref{owska:robust} is inspired by the construction of \skp in  ~\cite{sharif2020}, that is reproduced for completeness  
 in Appendix~\ref{app:theo}.
 The main difference of the constructions is that the  protocol message $c$ in  Construction 1  takes the seeds for both hash functions as inputs (in addition to $\x$). 
In Algorithm~\ref{alg:iK.Enc} of our construction, Alice executes a single hash computation $h(\x , (s',s))$ and sends the result together with the randomness $(s',s)$ to Bob. %to accomplish both information reconciliation and message authentication.
In Algorithm~\ref{alg:iK.dec} %of our construction, 
upon receiving the message, 
Bob searches the set $\mathcal{R}$ to find  a unique $\hat{\x } \in \mathcal{R}$ such that $h(\hat{\x }, (s',s))=h(\x , (s',s))$.
%If a unique $\hat{\x }$ is found in $\mathcal{R}$ with such property, t
Robust information reconciliation %and the message authentication  succeed simultaneously, 
succeeds if a unique $\hat{\x }$ is found in $\mathcal{R}$, allowing Bob % and then Bob 
to extract %generates the extracted 
the key $k=h'(\hat{\x},s')$.
In  the \skp protocol in~\cite{sharif2020}, the  protocol message is $h(\x ,s)$ together with the randomness $(s',s)$. 
Bob's algorithm is similar but uses a different check function: Bob searches the set $\mathcal{R}$ to obtain a unique $\hat{\x }$ such that $h(\hat{\x } ,s)=h(\x ,s)$, and is successful if such an $\hat{\x }$ is found.
This change to the input of the hash function requires a complete re-evaluation of the  protocol parameters including parameters of the hash functions and the set $\cal R$, and security and robustness evaluation of the protocol.

 In Theorem~\ref{Thm:ikemotsecurity}, we prove that Construction \ref{owska:robust} is an \skp, 
 and  provide relationship among parameters and the length of the extracted key, 
In Theorem~\ref{mac2:ctxt}, we prove the robustness and show that the construction is an \ska.
 Combining these two theorems, we conclude that our construction is an $(\epsilon,\sigma)$-OW-\ska %secure OW-SKA 
 protocol with robustness $\delta$ if the parameters $\ell$, $t$ and $\nu$ are chosen to satisfy both the Theorem~\ref{Thm:ikemotsecurity} and Theorem~\ref{mac2:ctxt}.
%Let $t \ge 2\tilde{H}_\infty (X|\Y )/\epsilon -\log(\epsilon) - 1$ in the construction \ref{constructionikem}. 

The following lemma~\ref{lemma:minentropy} can be proven using standard properties of independent variables and is given below.
%in Appendix~\ref{prooflemma:minentropy}. 
%\vspace{-1.5em}
\begin{lemma}\label{lemma:minentropy}  For any $(X_1Z_1), \cdots, (X_nZ_n)$ independently and identically distributed RV pairs, each with underlying distribution %according to
$P_{XZ}$,
 it holds that $\tilde{H}_\infty({\X }|{\Z })=n\tilde{H}_\infty(X|Z)$, where ${\X }=(X_1, \cdots, X_n)$ and ${\Z }=(Z_1, \cdots, Z_n).$ 
\end{lemma}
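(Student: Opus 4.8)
The plan is to unfold the definition of the average conditional min-entropy and then exploit the product structure of the i.i.d.\ distribution to factor both the inner maximization and the outer expectation. Write $p(x\mid z):=\mathrm{Pr}(X=x\mid Z=z)$ for the single-letter conditional p.m.f. Because the pairs $(X_i,Z_i)$ are mutually independent, the joint conditional p.m.f.\ factors as $\mathrm{Pr}(\X=\x\mid\Z=\z)=\prod_{i=1}^n p(x_i\mid z_i)$.

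First I would show that the inner maximization factors across coordinates. Since each factor $p(x_i\mid z_i)$ is non-negative and depends only on its own coordinate $x_i$, maximizing the product over the whole vector $\x$ amounts to maximizing each factor separately:
\[
\max_{\x}\prod_{i=1}^n p(x_i\mid z_i)=\prod_{i=1}^n\max_{x_i} p(x_i\mid z_i).
\]
Next I would take the expectation over $\Z$. Using that the $Z_i$ are mutually independent and identically distributed, the expectation of the product equals the product of the (identical) expectations:
\[
\mathbb{E}_{\z\leftarrow\Z}\prod_{i=1}^n\max_{x_i} p(x_i\mid z_i)=\prod_{i=1}^n\mathbb{E}_{z_i\leftarrow Z_i}\max_{x_i} p(x_i\mid z_i)=\Big(\mathbb{E}_{z\leftarrow Z}\max_{x} p(x\mid z)\Big)^n.
\]
Taking $-\log$ of both sides and using the definition of $\tilde{H}_\infty$ then gives $\tilde{H}_\infty(\X\mid\Z)=-n\log\big(\mathbb{E}_{z\leftarrow Z}\max_x p(x\mid z)\big)=n\,\tilde{H}_\infty(X\mid Z)$, as claimed.

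This lemma is essentially routine, so I do not expect a genuine obstacle; the only step deserving a word of justification is the interchange of the maximum with the product, which relies on the non-negativity of the probability factors and on the fact that each factor involves a disjoint block of coordinates. Everything else is a mechanical consequence of independence, identical distribution, and the definition of average conditional min-entropy.
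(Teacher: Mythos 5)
Your proposal is correct and follows essentially the same route as the paper's proof: both factor the inner maximization across coordinates (since each conditional factor depends on a disjoint coordinate and is non-negative) and then factor the outer expectation using independence of the $Z_i$, before taking $-\log$. The paper merely writes the expectation step as an explicit interchange of sum and product rather than invoking ``expectation of a product of independent terms,'' which is the same computation.
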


%\begin{comment}
%{\color{magenta} -REMOVE FROM ISIT

\begin{proof}
  $L_{\bf z}:=\max_{\bf x} P_{{\bf X}|{\bf Z}}({\bf x}|{\bf z})=\max_{\bf x}\prod_{i=1}^n P_{X_i|Z_i}(x_i|z_i)\\=\max_{\bf x}\prod_{i=1}^n P_{X|Z}(x_i|z_i).$ Note that ${\bf x}$ goes over ${\cal  X}^n$. Hence, $L_{\bf z}=\prod_{i=1}^n \max_{x_i} P_{X|Z}(x_i|z_i).$  We can define $x_z=\arg\max_x P_{X|Z}(x|z).$ Hence, 
$L_{\bf z}=\prod_{i=1}^n P_{X|Z}(x_{z_i}|z_i).$ Since 
\vspace{-1.5em}
{\small
\begin{align*} 
&\tilde{H}_\infty({\bf X}|{\bf Z})=-\log \sum_{{\bf z}\in {\cal Z}^n} (P_{\bf Z}({\bf z})L_{\bf z})\\
=&-\log \sum_{{\bf z}\in {\cal Z}^n}\prod_{i=1}^n P_{XZ}(x_{z_i}, z_i)\\
=&-\log \sum_{z_1, \cdots, z_n\in {\cal Z}}\prod_{i=1}^n P_{XZ}(x_{z_i}, z_i) \\
=&-\log \prod_{i=1}^n (\sum_{z_i\in {\cal Z}} P_{XZ}(x_{z_i}, z_i))\\
=&-\log \prod_{i=1}^n (\sum_{z\in {\cal Z}} P_{XZ}(x_{z}, z))
=-n\log (\sum_{z\in {\cal Z}} P_{XZ}(x_{z}, z))\\
=&-n\log (\sum_{z\in {\cal Z}} P_Z(z)P_{X|Z}(x_{z}|z))\\
=&-n\log (\sum_{z\in {\cal Z}} P_Z(z)\max_x P_{X|Z}(x|z))
=n\tilde{H}_\infty(X|Z).    
\end{align*} }
This completes the proof.  
%$\hfill\square$
%}
\end{proof}
%\end{comment} 

%\vspace{-1em}
\subsection{Security against passive adversaries}

\begin{theorem}[Secure OW-SKA protocol]
\label{Thm:ikemotsecurity}
Let the parameters $\nu$ and $t$ be chosen such that \\
{\scriptsize $\nu = nH(X |Y ) + \sqrt(n) \log (|\mathcal{X}|+3) \sqrt{\log (\frac{\sqrt{n}}{(\sqrt(n)-1)\epsilon})}$,  and \\$t \ge nH(X |Y ) + \sqrt(n) \log (|\mathcal{X}|+3) \sqrt{\log (\frac{\sqrt{n}}{(\sqrt(n)-1)\epsilon})} + \log (\frac{\sqrt(n)}{\epsilon})$}, then the OW-\ska protocol  $\mathsf{OWSKA}_{a}$ given in construction~\ref{owska:robust} establishes a secret key of length $\ell \le  n\tilde{H}_{\infty}(X |Z ) + 2\log(\sigma) + 2 - t$ that is $\epsilon$-correct and $\sigma$-indistinguishable from random (i.e. $(\epsilon,\sigma)$-OW-SKA protocol according to Definition~\ref{owska}).
\end{theorem}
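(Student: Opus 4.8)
The plan is to verify the two requirements of Definition~\ref{owska} separately: $\epsilon$-correctness (reliability) and $\sigma$-indistinguishability (security), checking that the stated choices of $\nu$, $t$ and $\ell$ make each bound go through. The correctness argument turns on the combinatorics of the reconciliation set $\mathcal{R}$, while the security argument is an application of the Generalized Leftover Hash Lemma with $h'$ as the extractor.

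For reliability I would isolate the two events that cause $\ikemd(\y,c)$ to fail to recover $\x$. The first is that the true sample is atypical, i.e.\ $\x\notin\mathcal{R}$, equivalently $-\log P_{\X|\Y}(\X|\Y)>\nu$. Since $-\log P_{\X|\Y}(\X|\Y)=\sum_{i=1}^n -\log P_{X|Y}(X_i|Y_i)$ is a sum of $n$ i.i.d.\ terms with mean $nH(X|Y)$, I would apply a Hoeffding/McDiarmid-type concentration inequality with per-coordinate range parameter $\log(|\mathcal{X}|+3)$; the deviation term in the definition of $\nu$ is chosen exactly so that this probability is at most $\epsilon\frac{\sqrt{n}-1}{\sqrt{n}}$. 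The second failure is a reconciliation collision: conditioned on $\x\in\mathcal{R}$, some $\hat{\x}\in\mathcal{R}\setminus\{\x\}$ satisfies $h(\hat{\x},(s',s))=h(\x,(s',s))$. Because $\mathcal{R}=\{\x:P_{\X|\Y}(\x|\y)\ge 2^{-\nu}\}$ has at most $2^\nu$ elements (each member carries conditional mass at least $2^{-\nu}$ and these sum to at most $1$), a union bound together with the universality of $h$ gives collision probability at most $|\mathcal{R}|\cdot 2^{-t}\le 2^{\nu-t}$; the hypothesis $t\ge\nu+\log(\sqrt{n}/\epsilon)$ makes this at most $\epsilon/\sqrt{n}$. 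Summing the two contributions yields $\Pr(K_A=K_B=K)\ge 1-\epsilon$.

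For security I would bound the conditional min-entropy of $\X$ from Eve's viewpoint and then invoke Lemma~\ref{glhl}. Eve's view is $(\Z,S',S,D)$ with $D=h(\X,(S',S))$ a $t$-bit string. Since conditioning on a variable supported on at most $2^t$ values lowers average min-entropy by at most $t$, and since the seeds $S,S'$ are independent of $(\X,\Z)$, I obtain $\tilde{H}_\infty(\X\mid\Z,S',S,D)\ge\tilde{H}_\infty(\X\mid\Z)-t=n\tilde{H}_\infty(X|Z)-t$, the last equality by Lemma~\ref{lemma:minentropy}. Feeding this into Lemma~\ref{glhl} gives $\Delta(h'(\X,S'),C,\Z;U_\ell,C,\Z)\le\frac12\sqrt{2^{\ell-(n\tilde{H}_\infty(X|Z)-t)}}$, and demanding this be at most $\sigma$ rearranges exactly to the claimed key length $\ell\le n\tilde{H}_\infty(X|Z)+2\log\sigma+2-t$.

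I expect two main obstacles. First, making the concentration step rigorous: the self-information $-\log P_{X|Y}(X_i|Y_i)$ is not a priori bounded, so I must justify the effective range $\log(|\mathcal{X}|+3)$ (e.g.\ via a bounded-differences argument or a truncation of negligible-mass atoms) in order to land precisely on the deviation term appearing in $\nu$. Second, the application of the Leftover Hash Lemma is delicate because the revealed value $D=h(\X,(S',S))$ depends on the extractor seed $S'$, so $D$ cannot simply be folded into seed-independent side information; I would treat $D$ as generic $t$-bit leakage, which is exactly what the min-entropy-loss bound legitimately captures, and verify that the independence hypotheses of Lemma~\ref{glhl} are respected by the resulting decomposition. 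Once these two points are settled, the remaining work is the routine rearrangement of the GLHL bound displayed above.
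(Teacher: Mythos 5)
Your proposal matches the paper's proof essentially step for step: the same decomposition of the failure probability into the atypicality event $\{\x\notin\mathcal{R}\}$ (bounded by exactly the concentration inequality you describe, which the paper obtains by citing Holenstein--Renner, Theorem~2, with range parameter $\log(|\mathcal{X}|+3)$) and the collision event (bounded by $|\mathcal{R}|\cdot 2^{-t}\le 2^{\nu-t}$ via universality), with the same split $\epsilon_1=(\sqrt{n}-1)\epsilon/\sqrt{n}$, $\epsilon_2=\epsilon/\sqrt{n}$; and the same security argument via $\tilde{H}_\infty(\X|\Z,C)\ge\tilde{H}_\infty(\X|\Z)-t$, Lemma~\ref{lemma:minentropy}, and the Generalized Leftover Hash Lemma. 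The two obstacles you flag are real but are resolved in the paper exactly as you anticipate, so no further work is needed.
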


%The proof of the Theorem~\ref{Thm:ikemotsecurity} is in Appendix~\ref{proofThm:ikemotsecurity}.

\begin{proof}
 We need to prove that the construction~\ref{owska:robust} satisfies Definition~\ref{owska} for secure OW-SKA protocol. We first prove the reliability of the protocol and then analyze its security.

{\it Reliability.} We first determine the value of $\nu$ and $t$ to bound the error probability (i.e. reliability) of the protocol by $\epsilon$, and then we compute the extracted secret key length $\ell$. In algorithm $\ikemd(\cdot)$~\ref{alg:iK.dec}, Bob searches the set $\mathcal{R}$ for $\hat{\x }$ and checks whether there is a unique $\hat{\x }$ whose hash value matches with the received hash value $d$. The algorithm succeeds if a unique $\hat{\x }$ is found in the set $\mathcal{R}$ with such property. Hence, the algorithm fails in the following two events: $(i)$ there is no element $\x \in \mathcal{R}$, whose hash value matches with the received hash value $d$ i.e. $\x \notin \mathcal{R}$, $(ii)$ the set $\mathcal{R}$ contains more than one element, whose hash values are the same as the received hash value $d$. Therefore, the probability that Bob fails to reproduce the correct key $\x $ is at most the sum of the probabilities of these two cases. These two cases represent the following two events respectively: 
\begin{align} \nonumber
 & \mathcal{E}_{1} = \{\x : \x  \notin \mathcal{R}\}=\{\x : -\log(P_{\X |\Y }(\x |\y )) > \nu\} \text{ and } \\\nonumber
 &\mathcal{E}_2 = \{\x  \in \mathcal{R}: \exists \text{ } \hat{\x } \in \mathcal{R} \text{ s.t. }  h(\x , (s',s)) = h(\hat{\x }, (s',s)\}.
 \end{align}
 
 For any $\epsilon > 0$, we choose $\epsilon_1 >0$ and $\epsilon_2 >0 $ satisfying $\epsilon_1 + \epsilon_2 \le \epsilon$. Let $\delta_1$ satisfy the equation \\$\epsilon_1 = 2^{\frac{-n{\delta_1}^2}{2\log^2(|\mathcal{X}|+3)}}$ and $\nu = H(\X |\Y ) + n\delta_1$.
 %??? where $n$ is the length of the vector $\x $ in bits, ??? {\color{red} confusing -$n$ is the number of samples - do you mean X is a binary variable??}
 %$\mathcal{X}=\{0,1\}$ and $|\mathcal{X}|=2$. 
 Then,  
 $\mathsf{Pr}(\mathcal{E}_{1})={\mathsf{Pr}}\left(-\log(P_{\X |\Y }(\x |\y )) > H(\X |\Y ) + n\delta_1\right) \le \epsilon_1$ (from~\cite{Holenstein11}, Theorem 2). 
 %{\bf this Theorem only applies to the setting where the sequence $X_1Y_1, X_2Y_2, \cdots. X_nY_n$ are independent. }
 We now proceed to compute an upper bound for  $\mathsf{Pr}(\mathcal{E}_{2})$. Since $h(\cdot)$ is a  universal hash family, 
 %for any $x_1,x_2\in \mathcal{X'}$, $x_1 \ne x_2$ and randomly chosen $s \in \mathcal{S}$, $\mathsf{Pr}\left(h(x_1,(s',s))=h(x_2,(s',s))\right) \le 2^{-t}$. Hence, in particular, 
 for any $\x ,\hat{\x } \in \mathcal{R}$, $\x  \ne \hat{\x }$, random $s' \in \mathcal{S'}$ and randomly chosen $s \in \mathcal{S}$, we have $\mathsf{Pr}\left(h(\x ,( s',s))=h(\hat{\x },( s',s))\right) \le 2^{-t}$, where the probability is over the uniform choices of $(s',s)$ from $(\mathcal{S}' \times \mathcal{S})$. Consequently, $\mathsf{Pr}(\mathcal{E}_2) \le |\mathcal{R}|\cdot 2^{-t}$. From equation~\ref{reconset} and considering that the sum of probability of elements of the set $\mathcal{R}$ is less than or equal to 1, we obtain \\$\frac{|\mathcal{R}|}{2^\nu} \le \mathsf{Pr}(\mathcal{R}) \le 1 \Rightarrow |\mathcal{R}| \le 2^\nu$. Hence, $\mathsf{Pr}(\mathcal{E}_2) \le |\mathcal{R}|\cdot 2^{-t} \le 2^{\nu-t}.$ If we set $t=\nu -\log(\epsilon_2)$, we obtain  $\mathsf{Pr}(\mathcal{E}_2) \le \epsilon_2$. Thus, if $t=H(\X |\Y ) + n\delta_1 -\log(\epsilon_2)$, \\then the probability that Bob fails to reproduce the correct key $\x$ is at most $\mathsf{Pr}(\mathcal{E}_1) + \mathsf{Pr}(\mathcal{E}_2) \le \epsilon_1 + \epsilon_2 = \epsilon$. 
 %Thus, the reliability condition is satisfied. 
 Furthermore, considering that $\X , \Y $ are generated due to $n$ independent and identical experiments $P_{X_i Y_i Z_i}(x_i,y_i,z_i)=P_{X Y Z}(x_i,y_i,z_i)$ for $ 1 \le i \le n$, and $P_{\X \Y \Z }(\x ,\y ,\z )=\prod_{i=1}^n P_{XYZ}(x_i,y_i,z_i)$,  we have  $H(\X |\Y )=nH(X |Y )$.
 Now, setting \\$\epsilon_1=(\sqrt{n}-1)\epsilon/\sqrt{n}$ and $\epsilon_2=\epsilon/\sqrt{n}$, we have that if \\$\nu = nH(X |Y ) + \sqrt{n} \log (|\mathcal{X}|+3) \sqrt{\log (\frac{\sqrt{n}}{(\sqrt{n}-1)\epsilon})}$ and \\$t \ge nH(X |Y ) + \sqrt{n} \log (|\mathcal{X}|+3) \sqrt{\log (\frac{\sqrt{n}}{(\sqrt{n}-1)\epsilon})} + \log (\frac{\sqrt{n}}{\epsilon})$, then $\mathsf{Pr}(\mathcal{E}_1) + \mathsf{Pr}(\mathcal{E}_2) \le \epsilon$. Therefore, we conclude that the construction~\ref{owska:robust} is $\epsilon$-correct, and the reliability condition of Definition~\ref{owska} is satisfied.

{\it Security.} We now prove that the construction~\ref{owska:robust} also satisfies the security property of Definition~\ref{owska}. Let the RV $\Z $ correspond to $\z $, the attacker's initial information. Let $K$, $C$, $S'$ and $S$ be the RVs corresponding to the extracted key $k$, the ciphetext $c$, $s'$ and $s$ respectively, where $k=h'(\x ,s')$ and $c=\Big(h\big(\x , ( s',s)\big),s',s\Big)$. The RV $C$ is distributed over $\{0,1\}^t$.  Since $s',s$ are randomly chosen and independent of RV $\X $, from [\cite{DodisORS08}, Lemma 2.2(b)], we obtain $\tilde{H}_\infty(\X |\Z ,C) =\tilde{H}_\infty\big(\X |\Z ,h\big(\X , ( S',S)\big)\big)\ge \tilde{H}_\infty(\X |\Z ) - t$. 
Therefore, utilizing this expression, from Lemma~\ref{glhl}, we have 
{\scriptsize
\begin{align}\nonumber
&\Delta(K,C,\Z;U_\ell,C,\Z) \\\nonumber
&=\Delta\Big(h'(\X , S'), h\left(\X ,( S', S)\right), S', S, \Z ; U_\ell, h\left(\X ,(S', S)\right), S', S, \Z \Big) \\\nonumber
&\le \frac{1}{2}\sqrt{2^{-\tilde{H}_{\infty}(\X |\Z ,h\left(\X ,( S', S)\right))}\cdot 2^
{\ell}} \le \frac{1}{2}\sqrt{2^{-\tilde{H}_{\infty}(\X |\Z )}\cdot 2^
{\ell + t}} \\\nonumber
&=\frac{1}{2}\sqrt{2^{-n\tilde{H}_{\infty}(X |Z ) + \ell + t}} \le \sigma
\end{align}     
}

%\vspace{-1.7em}
The last equality is obtained by applying  Lemma~\ref{lemma:minentropy} that proves $\tilde{H}_\infty({\X }|{\Z })=n\tilde{H}_\infty(X|Z)$. The last inequality follows due to $\ell \le n\tilde{H}_{\infty}(X |Z ) + 2\log(\sigma) + 2 - t$. 
 Consequently, the security property of Definition~\ref{owska} for OW-SKA protocol is satisfied. Therefore, construction~\ref{owska:robust} is an $(\epsilon,\sigma)$-OW-SKA protocol.  
 \end{proof}
%}
%{\color{red} The Theorem looks similar to the passive case paper-   you need to point out the difference here }

%Let $h(x,(s'_1,s'_2))=s'_2\cdot x^2 + s'_1\cdot x + $, where $s'_2$, $s'_1$ and $x$ are elements of $GF(2^{|x|})$. Let  

%\textbf{Proof of Theorem~\ref{Thm:ikemotsecurity}.}
%The proof is in the 
 %The proof is in the full version of the paper {\em https://arxiv.org }

\vspace{-.5em}
\section{Robustness of %OW-SKA protocol 
construction~\ref{owska:robust}} \label{robustness}
\vspace{-.3em}
We now prove that our construction~\ref{owska:robust} is a robust OW-\ska protocol as defined in Definition~\ref{robustowska}. In order to prove robustness, we consider specific construction of the almost strong universal hash family $h: \mathcal{X}^n \times (\mathcal{S} \times \mathcal{S'})  \rightarrow \{0,1\}^t$ described below.  
%We particularly focus on the construction of $h$. Although we have given an example of the hash family $h'$, any strong universal hash family $h':$  

\vspace{-.3em}
\subsection{ A one-time secure MAC in $P_{\X\Z}$  setting} 
\vspace{-.3em}
For the private sample %a correlated string 
$\x $, we split $\x$ into two strings: \\$y'_1=[\x ]_{1\cdots t}$ and $y'_2=[\x ]_{t+1 \cdots n}$, where $t \le n/2$. Observe that $\x =y'_2 \parallel y'_1$.  
 For a message $m$, we represent it as a sequence $(s',s)$, where $s'=s''_2 \parallel s''_1$, $s''_1 \in GF(2^{n})$, $s''_2 \in GF(2^{n})$, $s=s_2\parallel s_1$, $s_1 \in GF(2^t)$ and $s_2 \in_R GF(2^{n - t})$  such that $s_2$  is non-zero on the last element. Note that we can always have $s_2$ to be non-zero on the last element by suitably appending a 1 to $m$. The verifier checks that the last element of $s_2$ is non-zero.
%For two random elements $s''_1,s''_2 \in_R GF(2^{n})$, let $s'=s''_2 \parallel s''_1$ and $h'(\x ,s')=h'\big(\x ,(s''_2,s''_1)\big)=\big[s''_2 (\x )^2+s''_1 \x \big]_{1 \cdots \ell}$.  
We represent $s'$, suitably padded with 1s, as a sequence $(s'_r,\cdots,s'_1)$ of elements of $GF(2^{n - t})$, where $r$ is odd. 
%For two random elements $s_1 \in GF(2^t)$ and $s_2 \in_R GF(2^{n - t})$, let $s=s_2\parallel s_1$ and 
Define $h\big(\x ,(s',s)\big)=h\big(\x ,( s',(s_2,s_1))\big)=\big[s_2(y'_2)^{r+2} + {\sum}_{i=1}^r s'_i(y'_2)^i\big]_{1\cdots t} + (y'_1)^{3} + s_1y'_1$.  
We use this MAC to prove robustness of our construction in later section. Our MAC is inspired by the MAC in~\cite{CramerDFPW08}.

\begin{comment}
 For a message $m$, let $s'=m$. 
 %we represent it as a sequence $s'$, where $s'=s''_2 \parallel s''_1$, $s''_1 \in GF(2^{n})$, $s''_2 \in GF(2^{n})$, $s=s_2\parallel s_1$, $s_1 \in GF(2^t)$ and $s_2 \in_R GF(2^{n - t})$. 
%For two random elements $s''_1,s''_2 \in_R GF(2^{n})$, let $s'=s''_2 \parallel s''_1$ and $h'(\x ,s')=h'\big(\x ,(s''_2,s''_1)\big)=\big[s''_2 (\x )^2+s''_1 \x \big]_{1 \cdots \ell}$.  
We represent $s'$, suitably padded with 1s, as a sequence $(s'_r,\cdots,s'_1)$ of elements of $GF(2^{n - t})$, where $r$ is odd. We choose a random element $s_2$ in $GF(2^{n - t})$ such that $s_2 \ne 0$. We choose another element $s_1$ uniformly at random from $GF(2^t)$. Let \\$s=s_2\parallel s_1$.  
%For two random elements $s_1 \in GF(2^t)$ and $s_2 \in_R GF(2^{n - t})$, let $s=s_2\parallel s_1$ and 
Define $h\big(\x ,(s',s)\big)=h\big(\x ,( s',(s_2,s_1))\big)=\big[s_2(y'_2)^{r+2} + {\sum}_{i=1}^r s'_i(y'_2)^i\big]_{1\cdots t} + (y'_1)^{3} + s_1y'_1$. 
%Notice that $h$ is a strong universal hash family.  
We use this MAC to prove robustness of our construction in later section. Our MAC is inspired by the MAC in~\cite{CramerDFPW08}.
\end{comment}

\vspace{-.3em}
\begin{lemma}
\label{lemma:mac2}
Let $h\big(\x , (s',s)\big)=\big[s_2(y'_2)^{r+2}    + {\sum}_{i=1}^r s'_i(y'_2)^i\big]_{1 \cdots t} + (y'_1)^{3} + s_1y'_1$ as defined in section~\ref{robustness}. Let $\mathsf{mac}:=h(\cdot)$. Define $\mathsf{ver}(\x ,(s',s), t)$ s.t. it outputs {\em acc} $\text{ if }h(\x , (s',s)) = t$ and {\em rej}, \text{ otherwise}. Then $(\mathsf{gen},h,\mathsf{ver})$ is an $(|\mathcal{S'} \times \mathcal{S}|,P_{\X\Z},|\mathcal{T'}|,\delta_{mac})$-information-theoretic one-time MAC with  $\delta_{imp}=3(r+2)2^{-(t+n\tilde{H}_{\infty}(X |Z )-n)}$, $\delta_{sub}=3(r+2)2^{-(t+n\tilde{H}_{\infty}(X |Z )-n)}$, and $\delta_{mac}=3(r+2)2^{-(t+n\tilde{H}_{\infty}(X |Z )-n)}$, where $\mathcal{T'}=\{0,1\}^t$. 
\end{lemma}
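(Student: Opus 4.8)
The plan is to verify the three requirements of Definition~\ref{defn:mac} in turn: correctness, $\delta_{imp}$-impersonation resistance, and $\delta_{sub}$-substitution resistance, all in the special case $\X=\Y$ where Bob's verifier simply re-runs $\mathsf{mac}$. Correctness is immediate: since $\mathsf{ver}(\x,(s',s),\tau)$ accepts exactly when $h(\x,(s',s))=\tau$ and $\mathsf{mac}=h$, the honestly generated tag always verifies and the probability is $1$. The substance is the two unforgeability bounds, and for both I would first argue over a uniformly random key and then pay for the non-uniformity of $P_{\X|\Z}(\cdot|\z)$ through the average conditional min-entropy, invoking Lemma~\ref{lemma:minentropy} to replace $\tilde{H}_{\infty}(\X|\Z)$ by $n\tilde{H}_{\infty}(X|Z)$.

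The structural observation I would exploit is that, writing $\x=y'_2\parallel y'_1$, the tag splits as $h(\x,(s',s))=[Q(y'_2)]_{1\cdots t}+\big((y'_1)^3+s_1 y'_1\big)$, where $Q(y'_2)=s_2(y'_2)^{r+2}+\sum_{i=1}^r s'_i(y'_2)^i$ has degree exactly $r+2$ (as $s_2\neq 0$) and $(y'_1)^3+s_1 y'_1$ is a cubic in $y'_1$. Two root-counting facts drive everything: for any target the cubic has at most $3$ roots $y'_1$, and the truncation condition $[Q(y'_2)]_{1\cdots t}=c$ has at most $(r+2)2^{n-2t}$ solutions $y'_2$ (truncation pins $Q(y'_2)$ into an affine set of size $2^{n-2t}$, each value having at most $r+2$ preimages). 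For impersonation, fixing $(s',s)$ and a candidate tag $\tau'$, I count keys with $h(\x,(s',s))=\tau'$: for each $y'_2$ the equation is a cubic in $y'_1$, giving at most $3\cdot 2^{n-t}\le 3(r+2)2^{n-t}$ bad keys. Bounding $\pr[h(\X,(s',s))=\tau'\mid \Z=\z]$ by this count times $\max_{\x}P_{\X|\Z}(\x|\z)$ and averaging over $\z$ (so that $\mathbb{E}_{\z}\max_{\x}P_{\X|\Z}(\x|\z)=2^{-\tilde{H}_{\infty}(\X|\Z)}=2^{-n\tilde{H}_{\infty}(X|Z)}$) yields $\delta_{imp}\le 3(r+2)2^{n-t-n\tilde{H}_{\infty}(X|Z)}$, as claimed.

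For substitution I would bound the conditional probability as the ratio of the joint event to the observed event. The heart is a two-message collision count: for distinct messages with seeds $(\sigma',\sigma)\neq(s',s)$ and any $\tau_0,\tau_1$, the number of keys satisfying both $h(\x,m_0)=\tau_0$ and $h(\x,m_1)=\tau_1$ should be at most $3(r+2)2^{n-2t}$. Subtracting the two tag equations cancels the common term $(y'_1)^3$ and leaves $[Q_0(y'_2)+Q_1(y'_2)]_{1\cdots t}+(\sigma_1+s_1)y'_1=\tau_0+\tau_1$, and I would split on whether $\sigma_1=s_1$. If $\sigma_1=s_1$ the low parts cancel and $Q_0+Q_1$ is nonzero with zero constant term, hence nonconstant of degree at most $r+2$, so the truncated equation has at most $(r+2)2^{n-2t}$ solutions $y'_2$, each fixing a cubic in $y'_1$ with at most $3$ roots. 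If $\sigma_1\neq s_1$, then $y'_1$ becomes an affine function of the truncated high part, and substituting back produces two simultaneous $t$-bit truncation constraints on $y'_2$. Converting the count into the conditional probability, the numerator contributes the count times $2^{-n\tilde{H}_{\infty}(X|Z)}$ while the denominator (the probability of the observed genuine tag) supplies the extra $2^{-t}$ through the $t$-bit entropy loss on the tag, controlled by [DodisORS08, Lemma~2.2], giving $\delta_{sub}\le 3(r+2)2^{-(t+n\tilde{H}_{\infty}(X|Z)-n)}$.

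The \emph{main obstacle} I anticipate is exactly the two-message count when $\sigma_1\neq s_1$: because the truncation $[\cdot]_{1\cdots t}$ is only $GF(2)$-linear and not a field-polynomial operation, the two simultaneous truncation conditions on $y'_2$ cannot be handled by naive per-fiber root counting (which only yields the useless bound $2^{n-t}$). I would instead need to argue that $\big([Q_0(y'_2)]_{1\cdots t},[Q_1(y'_2)]_{1\cdots t}\big)$ is spread out over $GF(2^t)\times GF(2^t)$, for instance via a character-sum/trace estimate applied to every nonzero combination $\theta_0 Q_0+\theta_1 Q_1$ (each a nonzero polynomial of degree at most $r+2$), so as to recover the missing $2^{-t}$ factor. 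The secondary delicate point is lower-bounding the denominator of the substitution ratio uniformly over all admissible observed tags, which is where the $\epsilon$-almost-strong-universality of $h$ over the key (Definition~\ref{defn:suhash1}) together with the entropy-loss accounting must be combined carefully.
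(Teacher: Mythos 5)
Your overall strategy --- count the keys consistent with the forgery equations, multiply by $\max_{\x}P_{\X|\Z}(\x|\z)$, average over $\z$ via Lemma~\ref{lemma:minentropy}, and absorb the conditioning on the observed tag through the chain rule $\tilde{H}_{\infty}(\X|\Z,T')\ge\tilde{H}_{\infty}(\X|\Z)-t$ --- is exactly the paper's. Your correctness and impersonation arguments, and your substitution count in the subcase where the low seed parts agree (cancel the common cubic, obtain a nonzero truncated polynomial condition of degree at most $r+2$ on $y'_2$ with at most $(r+2)2^{n-2t}$ solutions, each leaving a cubic in $y'_1$ with at most three roots), coincide with what the paper does.

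The genuine gap is the subcase $s_1\ne s_{1f}$ (your $\sigma_1\ne s_1$), which you explicitly leave open and propose to attack with a character-sum/trace estimate over all nonzero combinations $\theta_0Q_0+\theta_1Q_1$. That step is never carried out, so the two-message collision bound $3(r+2)2^{n-2t}$ --- and with it the claimed $\delta_{sub}$ --- is not established by your argument. The paper closes this case without any equidistribution machinery: writing $\Delta Q(y'_2)=(s_2-s_{2f})(y'_2)^{r+2}+\sum_{i=1}^r(s'_i-s'_{if})(y'_2)^i$, the difference of the two tag equations reads $t'-t'_f=[\Delta Q(y'_2)]_{1\cdots t}+(s_1-s_{1f})y'_1$, and since $s_1\ne s_{1f}$ one can solve for $y'_1$ as an explicit function of $y'_2$ and substitute back into the genuine tag equation~\eqref{mac:senond1appn}; the result is a single truncated condition on $y'_2$ with leading term $-(s_1-s_{1f})^{-3}(s_2-s_{2f})^3(y'_2)^{3(r+2)}$, hence at most $3(r+2)2^{n-2t}$ admissible $y'_2$, each of which determines $y'_1$ uniquely through the difference equation. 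So the missing ingredient is an elementary elimination step, not a Weil-type bound. (Your caution that $[\cdot]_{1\cdots t}$ is only $GF(2)$-linear and does not commute with cubing is a fair remark about where that elimination must be handled carefully --- the cube of $y'_1$ is taken in $GF(2^t)$ after truncation --- but the paper's count is reached by this route, and your proposal as written does not reach the stated bound.)
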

%\vspace{-.8em}

\begin{proof}
We need to prove that $(\mathsf{gen},h,ver)$ satisfies Definition~\ref{defn:mac} for the case $\x = \y$. Since $\x = \y $, it is easy to see that the correctness property of Definition~\ref{defn:mac} is satisfied. We now focus on the unforgeability property.
We first compute an adversary's success probability in an {\it{impersonation attack}} and then compute the adversary's success probability in an {\it{substitution attack}}.

{\it{Impersonation attack.}} In this attack, an adversary tries to generate a correct authenticated message $(t'_f,s'_f,s_f)$ such that
\vspace{-1em}
{\small
\begin{align}\label{eqn:impappn}
t'_f=\big[s_{2f}(y'_2)^{r+2} + {\sum}_{i=1}^r s'_{if}(y'_2)^i\big]_{1 \cdots t} + (y'_1)^{3} + s_{1f}y'_1, 
\vspace{-.6em}
\end{align}
}
where $s_f=s_{2f} \parallel s_{1f}$, $s'_f=(s'_{rf}\parallel \cdots \parallel s'_{1f})$, and the last element of $s_{2f}$ is non-zero. 
This is a non-zero polynomial in two variables $y'_2$ and  $y'_1$ of degree at most $(r+2)$. The term $s_{2f}(y'_2)^{r+2}+ {\sum}_{i=1}^r s'_{if}(y'_2)^i + 0^{n-t-t}|((y'_1)^{3} + s_{1f}y'_1)$ takes on each element in $GF(2^{n-t})$ for at most $3(r+2)2^t$ times when  $y'_2$ and $y'_1$ varies. Hence, there are at most $3(r+2)2^t (2^{n-t}/2^t)=3(r+2)2^{n-t}$ values of $(y'_2 \parallel y'_1)$ that satisfies the equation~\ref{eqn:impappn}. Let $\X$ and $\Z $ denote the RVs corresponding to $\x$ and $\z $ respectively.  Note that $\x = (y'_2 \parallel y'_1)$, and each value of $(y'_2 \parallel y'_1)$ (i.e. $\x$) occurs with probability at most $2^{-H_{\infty}(\X |\Z=\z )}$.   Thus, the probability that an adversary can successfully construct a correctly authenticated message = 

\vspace{-1.7em}
{\small
\begin{align}\nonumber
&\mathbb{E}_{\z  \leftarrow \Z}\Big[\mathsf{Pr}_{\X }\big[\mathsf{ver}(\x,(s'_f,s_f),t')=acc \text{ $|$ } \Z=\z\big]\Big] \\\nonumber
&=\mathbb{E}_{\z \leftarrow \Z }\Big[\mathsf{Pr}_{\X }\big[ t'_f=[s_{2f}(y'_2)^{r+2} + {\sum}_{i=1}^r s'_{if}(y'_2)^i]_{1 \cdots t} \\\nonumber
&\qquad + (y'_1)^{3} + s_{1f}y'_1 \text{ $|$ } \Z= \z\big]\Big] \\\nonumber
&\le \mathbb{E}_{\z  \leftarrow \Z}\big[3(r+2)2^{n-t}2^{-H_{\infty}(\X |\Z = \z)}\big] \\\nonumber
&= 3(r+2)2^{n-t}\mathbb{E}_{\z  \leftarrow \Z }\big[2^{-H_{\infty}(\X |\Z = \z)}\big]  \\\nonumber
&= 3(r+2)2^{n-t}\big[2^{-\tilde{H}_{\infty}(\X |\Z )}\big] 
=3(r+2)2^{-(t+\tilde{H}_{\infty}(\X |\Z ) - n)} \\\label{eqn:macimpappn}
&= 3(r+2)2^{-(t+n\tilde{H}_{\infty}(X |Z )-n)}.  
\end{align}
}
The last equality follows from Lemma~\ref{lemma:minentropy} that proves $\tilde{H}_{\infty}(\X |\Z )=n\tilde{H}_{\infty}(X |Z )$. Therefore, the success probability in an {\it impersonation attack} is at most $\delta_{imp}=3(r+2)2^{-(t+n\tilde{H}_{\infty}(X |Z )-n)}$.

{\it{Substitution attack.}}
Assume that an adversary is given a correctly authenticated message $c=(t',s',s)$, where 
%\vspace{-1em}
{\small
\begin{align} \label{mac:senond1appn}
t'=\big[s_2(y'_2)^{r+2} + {\sum}_{i=1}^r s'_i(y'_2)^i\big]_{1 \cdots t } + (y'_1)^{3} + s_1 y'_1 .
\end{align}
}
Let $T'$, $C$, $Y'_1$ and $Y'_2$ denote the RVs corresponding to $t'$, $c$, $y'_1$ and $y'_2$ respectively. After observing the message $c=(t',s',s)$, the adversary tries to generate a forged message $(t'_f,s'_f,s_f)$ such that \\$t'_f=\big[s_{2f}(y'_2)^{r+2}+ {\sum}_{i=1}^r s'_{if}(y'_2)^i\big]_{1 \cdots t} +(y'_1)^{3} + s_{1f}y'_1$, $(t'_f,s'_f,s_f) \ne (t',s',s)$,  and the last element of $s_{2f}$ is non-zero. Then the expected probability that an adversary can successfully construct a forged message, given any message $c$, is \\ $\mathbb{E}_{(c, \z ) \leftarrow (C, \Z)}\big[\mathsf{Pr}_{\X }[ t'_f=\big[s_{2f}(y'_2)^{r+2} + {\sum}_{i=1}^r s'_{if}(y'_2)^i\big]_{1 \cdots t} +(y'_1)^{3} + s_{1f}y'_1 \text{ $|$ } C=c, \Z = \z ]\big]$. 
%In probability expression, $x$ is chosen from the distribution of $\X $ conditioned on $ck$ or equivalently $(T',k)$. 
Now if $(s'_f,s_f)=(s',s)$, then Bob will reject the message unless $t'_f=t'$. Hence, we only need to focus on the case $(s'_f,s_f) \ne (s',s)$.

Since addition and subtraction correspond to the bit-wise exclusive-or in the corresponding field, we have 
%\vspace{-1em}
\begin{align}\nonumber \label{mac:seconddiffappn}
t'-t'_f=&\big[(s_{2} - s_{2f})(y'_2)^{r+2}+ {\sum}_{i=1}^r (s'_{i} - s'_{if})(y'_2)^i\big]_{1 \cdots t} \\
&\quad + (s_1 - s_{1f})y'_1
\end{align}
    If $(s_1=s_{1f})$, then the degree of this polynomial in $y'_2$ is at most $(r+2)$. Now  the term\\ $[(s_{2} - s_{2f})(y'_2)^{r+2} + {\sum}_{i=1}^r (s'_{i} - s'_{if})(y'_2)^i\big]$ takes on each element of the field $GF(2^{n-t})$ at most $(r+2)$ times as $y'_2$ varies. Consequently, there are at most  $(r+2)(2^{n-t}/2^t)=(r+2)2^{n-2t}$ values of $y'_2$ that satisfies equation~\ref{mac:seconddiffappn}. %$T-T_f=\big[(s_{2} - s_{2f})(y'_2)^{r+2}+(s_1 - s_{1f})(y'_2)^{r+1} + {\sum}_{i=1}^r (s'_{i} - s'_{if})(y'_2)^i\big]_{1}^t$. 
Equation~\ref{mac:senond1appn} implies that, for each value of $y'_2$, there exists at most three values of $y'_1$ which satisfies the equation. Therefore, there are at most $3(r+2)2^{n-2t}$ values of $(y'_2 \parallel y'_1)$ that satisfies both the equation~\ref{mac:senond1appn} and equation~\ref{mac:seconddiffappn}. 
If $(s_1 \ne s_{1f})$, then representing $y'_1$ in equation~\ref{mac:seconddiffappn} in terms of $y'_2$ and substituting it in equation~\ref{mac:senond1appn}, we obtain \\$t'=[-(s_1 - s_{1f})^{-3}(s_{2} - s_{2f})^3 (y'_2)^{3(r+2)} ]_{_{1 \cdots t}} + g(y'_2)$ for some polynomial  $g(y'_2)$ of degree at most $3r$. Therefore, there are at most $3(r+2)2^{n-2t}$ values of $y'_2$ to satisfy this equation. From  equation~\ref{mac:seconddiffappn}, we see that, for each value of $y'_2$, there is a unique $y'_1$ that satisfies the equation. Therefore, in both the cases, there are at most $3(r+2)2^{n-2t}$ values of $(y'_2|| y'_1)$ that satisfies both the equation~\ref{mac:senond1appn} and equation~\ref{mac:seconddiffappn}. 
Note that $\x =y'_2 \parallel y'_1$. Then each value of  $(y'_2\parallel y'_1)$ (i.e. $\x $) occurs with probability at most $2^{-H_{\infty}(\X |\Z = \z ,T'=t')}$, where $\Z $ is the RV corresponding to $\z $, Eve's initial information.  Since $|t'|=t$, applying [\cite{DodisORS08}, Lemma 2.2(b)], we obtain  $\tilde{H}_{\infty}(\X |\Z ,T') \ge \tilde{H}_{\infty}(\X |\Z ) - t$.  

Therefore, the required expected probability =  
\vspace{-.5em}
{\small
\begin{align}\nonumber
&\mathbb{E}_{(c, \z ) \leftarrow (C, \Z)}\Big[\pr\big[ \mathsf{ver}(\x,(s'_f,s_f),t')=acc \\\nonumber
&\qquad\text{ $|$ } (s',s), 
 \mathsf{mac}(\x ,(s',s))=t, \Z =\z\big]\Big] \\\nonumber
&=\mathbb{E}_{(c, \z ) \leftarrow (C, \Z)}\Big[\mathsf{Pr}_{\X }\big[ t'_f=[s_{2f}(y'_2)^{r+2} + {\sum}_{i=1}^r s'_{if}(y'_2)^i]_{1 \cdots t}\\\nonumber
&\qquad +(y'_1)^{3} + s_{1f}y'_1 \text{ $|$ } C=c, \Z = \z \big]\Big] \\\nonumber
&=\mathbb{E}_{(c, \z ) \leftarrow (C, \Z )}\Big[\mathsf{Pr}_{(Y'_2\parallel Y'_1)}\big[ t'_f=[s_{2f}(y'_2)^{r+2} + \\\nonumber 
&\qquad{\sum}_{i=1}^r s'_{if}(y'_2)^i]_{1 \cdots t} +(y'_1)^{3} + s_{1f}y'_1 \\\nonumber
&\qquad\wedge t'=[s_2(y'_2)^{r+2} + {\sum}_{i=1}^r s'_i(y'_2)^i\big]_{1 \cdots t} +(y'_1)^{3} + s_{1}y'_1 \\\nonumber
&\qquad \text{ $|$ } C=c, \Z= \z \big]\Big]& \\\nonumber
&=\mathbb{E}_{(c, \z ) \leftarrow (C, \Z )}\Bigg[\mathsf{Pr}_{(Y'_2\parallel Y'_1)}\bigg[ t'-t'_f=\big[(s_{2} - s_{2f})(y'_2)^{r+2}+ \\\nonumber 
&\qquad  {\sum}_{i=1}^r (s'_{i} - s'_{if})(y'_2)^i\big]_{1 \cdots t}+ (s_1 - s_{1f})y'_1 \\\nonumber
&\qquad\wedge t'=\big[s_2(y'_2)^{r+2} + {\sum}_{i=1}^r s'_i(y'_2)^i\big]_{1 \cdots t} +(y'_1)^{3} + s_{1}y'_1 \\\nonumber
&\qquad \text{ $|$ } C=c, \Z = \z \bigg]\Bigg] \\\nonumber
&\le \mathbb{E}_{(c, \z ) \leftarrow (C, \Z )}\big[3(r+2)2^{n-2t}2^{-H_{\infty}(\X |\Z = \z , T'=t')}\big] \\\nonumber
&= 3(r+2)2^{n-2t}\mathbb{E}_{(c, \z ) \leftarrow (C, \Z )}\big[2^{-H_{\infty}(\X |\Z = \z ,T'=t')}\big]  \\\nonumber
&= 3(r+2)2^{n-2t}\big[2^{-\tilde{H}_{\infty}(\X |\Z ,T')}\big] 
\le 3(r+2)2^{n-2t}2^{-(\tilde{H}_{\infty}(\X |\Z )-t)} \\\label{eqn:macsubappn}
&= 3(r+2)2^{-(t+\tilde{H}_{\infty}(\X |\Z )-n)}= 3(r+2)2^{-(t+n\tilde{H}_{\infty}(X |Z )-n)}.  
\end{align}   
%\end{proof}
}
The last equality is obtained by using Lemma~\ref{lemma:minentropy} that proves $\tilde{H}_{\infty}(\X |\Z )= n\tilde{H}_{\infty}(X |Z )$.
Therefore, the success probability in a {\it substitution attack} is at most \\$\delta_{sub}=3(r+2)2^{-(t+n\tilde{H}_{\infty}(X |Z )-n)}$.

Consequently, $(\mathsf{gen},h,ver)$ is an $(|\mathcal{S'} \times \mathcal{S}|,P_{\X\Z},|\mathcal{T'}|,\delta_{mac})$\\-information-theoretic one-time MAC with  $\delta_{imp}=3(r+2)2^{-(t+n\tilde{H}_{\infty}(X |Z )-n)}$, $\delta_{sub}=3(r+2)2^{-(t+n\tilde{H}_{\infty}(X |Z )-n)}$, and $\delta_{mac}$$=\mathsf{max}\{\delta_{imp},\delta_{sub}\}$$=3(r+2)2^{-(t+n\tilde{H}_{\infty}(X |Z )-n)}$.  
\end{proof}

\subsubsection{Comparison with other MAC constructions}\label{comparison} 
Comparing our  results with 
\cite{MaurerW03b}, we note that
the MAC construction in [~\cite{MaurerW03b}, Theorem 3] for $t=n/2$, has %the adversary's 
success probabilities in {\it impersonation } and  {\it substitution attacks} as (roughly) $2^{-(H_2(\X | \Z = \z)-n/2)/2}) \approx 2^{-(n\tilde{H}_\infty(X | Z)-n/2)/2}$ and $(3\cdot 2^{-(H_2(\X | \Z = \z)-n/2)/4}) \approx 3 \cdot (2^{-(n\tilde{H}_\infty(X | Z)-n/2)/4})$, respectively, where $H_2(\X):=-\log(\sum_{\x \in \mathcal{X}^n}P_{\X }(\x)^2)$, the R{\'{e}}nyi entropy of $\X$.
In our construction however, the probabilities are significantly less as %substantially improves these results.
Lemma~\ref{lemma:mac2} shows that the success probabilities of the two attacks are the same, %in  {\it impersonation } and {\it substitution attacks} both, 
and are
at most $3(r+2)2^{-(t+n\tilde{H}_{\infty}(X |Z )-n)}$  which for $t=n/2$, %probabilities 
are bounded by $3(r+2)2^{-(n\tilde{H}_{\infty}(X |Z )-n/2)}$. 

%%%
We note that the construction in ~\cite{MaurerW03b}  focuses on privacy amplification and assumes $\x=\y$, while  our construction  uses the MAC for reconcilliation, also.  Assuming $\x=\y$  %(i.e. without information reconciliation which is the case considered in~\cite{MaurerW03b}), 
in our protocol,  %case of robustness, we extract 
the extracted key  length will be roughly $(2n\tilde{H}_{\infty}(X |Z ) - n)$, and the extraction is possible as long as $n\tilde{H}_{\infty}(X |Z ) > n/2$ i.e. $\tilde{H}_{\infty}(X |Z ) > 1/2$. These  improve  the results of Maurer et al.[~\cite{MaurerW03b}, Theorem 5] that require $H_2(\X|\Z=\z) > 2n/3$, i.e. roughly $n\tilde{H}_{\infty}(X |Z ) > 2n/3$, and extracts a key of length roughly  $H_2(\X|\Z=\z) - 2n/3$, i.e. approximately $n\tilde{H}_{\infty}(X |Z ) - 2n/3$.

\remove{
%%%%%%%%%%%%%%%%%%
From Lemma~\ref{lemma:mac2}, we see that an adversary's success probability in both an {\it impersonation attack} and {\it substitution attack} is at most $(r+2)2^{-(t+n\tilde{H}_{\infty}(X |Z )-n)}$. If $t=n/2$, these probabilities are bounded by $(r+2)2^{-(n\tilde{H}_{\infty}(X |Z )-n/2)}$. 
%%%
This improves on  the MAC of Maurer et al.[~\cite{MaurerW03b}, Theorem 3] who require that, for t=n/2, an adversary's success probability in {\it impersonation attack} and in {\it substitution attack}  are roughly $2^{-(H_2(\X | \Z = \z)-n/2)/2}) \approx 2^{-(n\tilde{H}_\infty(X | Z)-n/2)/2}$ and $(3\cdot 2^{-(H_2(\X | \Z = \z)-n/2)/4}) \approx 3 \cdot (2^{-(n\tilde{H}_\infty(X | Z)-n/2)/4})$ respectively, where $H_2(\X):=-\log(\sum_{\x \in \mathcal{X}^n}P_{\X }(\x)^2)$, the R{\'{e}}nyi entropy of $\X$. 
%%%%%
The construction of Maurer et al.~\cite{MaurerW03b} only focus on privacy amplification without information reconciliation. If $\x=\y$ (i.e. without information reconciliation which is the case considered in~\cite{MaurerW03b}), in case of robustness, we extract a key of length roughly $(2n\tilde{H}_{\infty}(X |Z ) - n)$ and the extraction is possible as long as $n\tilde{H}_{\infty}(X |Z ) > n/2$ i.e. $\tilde{H}_{\infty}(X |Z ) > 1/2$. This also improves on the results of Maurer et al.[~\cite{MaurerW03b}, Theorem 5] who require $H_2(\X|\Z=\z) > 2n/3$ i.e. roughly $n\tilde{H}_{\infty}(X |Z ) > 2n/3$ and extract a key of length roughly  $H_2(\X|\Z=\z) - 2n/3$ i.e. approximately $n\tilde{H}_{\infty}(X |Z ) - 2n/3$. 
}
%
\begin{comment}
Our MAC is significantly more efficient than Maurer's MAC ~\cite{Maurer1997authencation} %, which is for the general case of $P_{\X\Y\Z}$ and 
that is aimed %an inefficient MAC construction 
to prove {\em feasibility of authentication for  general $P_{\X\Y\Z}$ setting}.
%of $P_{\X\Y\Z}$, where % and proved that,
To authenticate an $\ell$-bit message, the MAC requires  $2\ell$-bit authenticator using $n=4\ell$ bits of the random string $\x$. 
%Maurer~\cite{Maurer1997authencation} designed a MAC for the general setting of $P_{\X\Y\Z}$ and proved that, to authenticate an $\ell$-bit message, it requires $2\ell$-bit authenticator using $n=4\ell$ bits of the random string $\x$.
%{\color{red} YOU ALSO NEED TO COMPARE TO WOLF}

\end{comment}

\subsubsection{Robustness analysis}\label{robustnessanalysis}
In order to prove robustness, we consider specific constructions of the strong universal hash family $h': \mathcal{X}^n \times \mathcal{S'} \rightarrow \{0,1\}^\ell$ and the universal hash family \\$h: \mathcal{X}^n \times (\mathcal{S} \times \mathcal{S'})   \rightarrow \{0,1\}^t$ as described below. $h$ is also an almost strong universal hash family when the probability is taken over the randomness of $\mathcal{X}^n$.  For the private sample %a correlated string 
$\x $, let $y'_1=[\x ]_{1\cdots t}$ and $y'_2=[\x ]_{t+1 \cdots n}$, where $t \le n/2$. Notice that $\x =y'_2 \parallel y'_1$.  
%$x_1=[x]_{1}^{t}$, $x_2=[x]_{t+1}^{2t}$,..., $x_i=[x]_{(i-1)\cdot t + 1}^{i \cdot t}$ and so on. If length of the correlated string x is not divisible by $t$, it is padded with `1's at the end after the highest order bit. 
%Let $x_{\big\lceil \frac{|x|}{\ell} \big\rceil}=[x]_{(\big\lceil \frac{|x|}{\ell} \big\rceil-1)\cdot t + 1}^{\big\lceil \frac{|x|}{\ell} \big\rceil \cdot t}$
%Let $y'_1={\prod}_{i=1}^{\big\lceil \frac{|x|}{\ell} \big\rceil}x_i$, where the multiplication is computed over the field $GF(2^t)$. 
 For two random elements $s''_1,s''_2 \in_R GF(2^{n})$, let $s'=s''_2 \parallel s''_1$ and $h'(\x ,s')=h'\big(\x ,(s''_2,s''_1)\big)=\big[s''_2 (\x )^2+s''_1 \x \big]_{1 \cdots \ell}$.  We represent $s'$, suitably padded with 1s, as a sequence $(s'_r,\cdots,s'_1)$ of elements of $GF(2^{n - t})$ such that $r$ is odd. For two random elements $s_1 \in_R GF(2^{t})$, $s_2 \in_R GF(2^{n - t})$  such that the last element of $s_{2}$ is non-zero, let $s=s_2\parallel s_1$ and $h\big(\x ,(s',s)\big)=h\big(\x ,( s',(s_2,s_1))\big)=\big[s_2(y'_2)^{r+2} + {\sum}_{i=1}^r s'_i(y'_2)^i\big]_{1\cdots t} + (y'_1)^{3}+ s_1 y'_1$.  While verifying the tag, the receiver checks that the last element of $s_{2}$ is non-zero. 
Notice that $h$ and $h'$ are universal hash family and strong universal hash family respectively.  
 We now prove the following lemma~\ref{lemma:fuzzy} that will be used to prove robustness of our protocol. It is analogous to the Lemma 3.5 of Fuller et al.~\cite{Fuller2020fuzzy}. 
% The proof of this lemma~\ref{lemma:fuzzy} is in Appendix~\ref{prooflemma:fuzzy}. 
%The fuzzy min-entropy $\tilde{H}_{\nu,\infty}^{\mathsf{fuzz}}(\X )$ and conditional fuzzy min-entropy $\tilde{H}_{\nu,\infty}^{\mathsf{fuzz}}(\X | A)$ are defined in Appendix~\ref{prelimappendix}. 
%The fuzzy min-entropy $\tilde{H}_{\nu,\infty}^{\mathsf{fuzz}}(\X )$ and conditional fuzzy min-entropy are defined in Appendix~\ref{prelimappendix}.
%\vspace{-.5em}
\begin{lemma}\label{lemma:fuzzy}
If $A$ is a random variable over a set of at most $2^b$ possible values, then  
$\tilde{H}_{\nu,\infty}^{\mathsf{fuzz}}(\X | A) \ge \tilde{H}_{\nu,\infty}^{\mathsf{fuzz}}(\X ) - b$.
\end{lemma}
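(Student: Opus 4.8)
The plan is to reduce this to the standard chain-rule argument for average conditional min-entropy (the same manipulation that underlies [\cite{DodisORS08}, Lemma 2.2(b)]), exploiting the fact that the defining set $\{\y : -\log(P_{\X|\Y}(\x|\y)) \le \nu\}$ is built from the \emph{unconditional} distribution $P_{\X|\Y}$ and therefore does not change when we condition on $A$. First I would introduce the shorthand $B_\nu(\x) := \{\y : -\log(P_{\X|\Y}(\x|\y)) \le \nu\}$, which lets me rewrite the two quantities compactly as $H_{\nu,\infty}^{\mathsf{fuzz}}(\X) = -\log\big(\max_{\x}\pr[\Y \in B_\nu(\x)]\big)$ and $\tilde{H}_{\nu,\infty}^{\mathsf{fuzz}}(\X|A) = -\log\big(\mathbb{E}_{a \leftarrow A}\max_{\x}\pr[\Y \in B_\nu(\x) \mid A=a]\big)$. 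The effect of this rewriting is that the function $\x \mapsto \pr[\Y \in B_\nu(\x)]$ now plays exactly the role that $x \mapsto \pr[X=x]$ plays in the ordinary min-entropy chain rule, so the classical argument should transfer verbatim.

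The key step is to bound the inner expectation from above by $2^b$ times the unconditional maximum. I would expand the expectation as $\sum_a \pr[A=a]\max_{\x}\pr[\Y \in B_\nu(\x) \mid A=a]$, push the factor $\pr[A=a]$ inside the maximum to obtain $\sum_a \max_{\x}\pr[\Y \in B_\nu(\x),\, A=a]$, and then apply the trivial termwise bound $\pr[\Y \in B_\nu(\x),\, A=a] \le \pr[\Y \in B_\nu(\x)]$. Since $A$ takes at most $2^b$ values, summing $\max_{\x}\pr[\Y \in B_\nu(\x)]$ over the support of $A$ gives at most $2^b \max_{\x}\pr[\Y \in B_\nu(\x)]$. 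Taking $-\log$ of both sides, which reverses the inequality, then yields $\tilde{H}_{\nu,\infty}^{\mathsf{fuzz}}(\X|A) \ge H_{\nu,\infty}^{\mathsf{fuzz}}(\X) - b$, as required.

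There is no serious technical obstacle; the one point that needs care is the observation in the first paragraph, namely that $B_\nu(\x)$ is defined with respect to $P_{\X|\Y}$ and is independent of $A$, so that conditioning on $A$ affects only the probability weights $\pr[\Y=\y \mid A=a]$ and not the index set of the inner sum. If the balls $B_\nu(\x)$ themselves varied with $a$, then both the interchange that produces the joint probabilities and the termwise domination by the marginal would fail, so I would state this invariance explicitly before running the argument. Everything else is the routine chain-rule manipulation: expectation-to-sum, moving $\pr[A=a]$ into the maximum to form joint probabilities, termwise domination by the marginal, and the support-size bound on $A$.
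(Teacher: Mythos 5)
Your proposal is correct and follows essentially the same route as the paper's proof: expand the expectation over $a$, absorb $\pr[A=a]$ into the inner sum to form the joint probability $\pr[\Y=\y \wedge A=a]$, dominate it termwise by the marginal $\pr[\Y=\y]$, and use the bound of $2^b$ on the support size of $A$ before taking $-\log$. Your explicit remark that the index set $B_\nu(\x)$ is defined via the unconditional $P_{\X|\Y}$ and hence does not vary with $a$ is a point the paper uses only implicitly, but it does not change the argument.
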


\begin{proof}
{\footnotesize
\begin{align}\nonumber
&\tilde{H}_{\nu,\infty}^{\mathsf{fuzz}}(\X | A) \\\nonumber
&=-\log\Big(\underset{a \leftarrow A}{\mathbb{E}}\max_{\x}\sum_{\y:-\log (P_{\X|\Y}(\x|\y)) \le \nu}\pr[\Y=\y|A=a]\Big) \\\nonumber
&=-\log\Big(\sum_a\max_{\x}\sum_{\y:P_{\X|\Y}(\x|\y)) \ge 2^{-\nu}}\pr[\Y=\y|A=a]\pr[A=a]\Big) \\\nonumber
&=-\log\Big(\sum_a\max_{\x}\sum_{\y:P_{\X|\Y}(\x|\y)) \ge 2^{-\nu}}\pr[\Y=\y \wedge A=a]\Big) \\\nonumber
&\ge -\log\Big(\sum_a\max_{\x}\sum_{\y:P_{\X|\Y}(\x|\y)) \ge 2^{-\nu}}\pr[\Y=\y ]\Big) \\\nonumber
&\ge -\log\Big(2^b \max_{\x}\sum_{\y:P_{\X|\Y}(\x|\y)) \ge 2^{-\nu}}\pr[\Y=\y ]\Big) \\\nonumber
&\ge -\log\Big( \max_{\x}\sum_{\y:-\log (P_{\X|\Y}(\x|\y)) \le \nu}\pr[\Y=\y ]\Big) - b \\\nonumber
&\ge \tilde{H}_{\nu,\infty}^{\mathsf{fuzz}}(\X ) - b  \qquad\qquad\qquad\qquad\qquad\qquad\qquad\qquad 
\end{align}
}   
\end{proof}

\vspace{-1em}
\begin{theorem}[robustness of secure OW-SKA protocol]\label{mac2:ctxt}
 The robustness of secure OW-SKA protocol as defined in Definition~\ref{robustowska} is broken with probability at most  \\$3(r+2)\big( 2^{-(t-n+\min\{n\tilde{H}_{\infty}(X |Z ),\tilde{H}_{\nu,\infty}^{\mathsf{fuzz}}(\X |\Z )\})}\big)$. Hence, if $t \ge n + \log\big(\frac{3(r+2)}{\delta}\big) - \min\{n\tilde{H}_{\infty}(X |Z ),\tilde{H}_{\nu,\infty}^{\mathsf{fuzz}}(\X |\Z )\}$, then 
 the OW-\ska protocol $\mathsf{OWSKA}_{a}$ given in construction~\ref{owska:robust} has robustness $\delta$.
\end{theorem}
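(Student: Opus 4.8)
The plan is to unpack the robustness experiment of Definition~\ref{robustowska} and reduce it to the two failure modes that the minimum in the bound anticipates. Concretely, the adversary $\mathcal{D}$ is handed $(\z, c)$ with $c = (d, s', s)$ and $d = h(\x, (s', s))$, and outputs $\tilde c = (\tilde d, \tilde s', \tilde s) \ne c$; it wins exactly when $\ikemd(\y, \tilde c) \ne \perp$, i.e. when there is a \emph{unique} $\hat\x \in \mathcal{R}$ with $\tilde d = h(\hat\x, (\tilde s', \tilde s))$, where $\mathcal{R} = \{\x' : -\log P_{\X|\Y}(\x'|\y) \le \nu\}$. I would split the winning event according to whether the accepted $\hat\x$ equals Alice's true sample $\x$ or is some $\hat\x \ne \x$; the first branch is charged to MAC unforgeability and the second to fuzzy min-entropy, and the two bounds will assemble into the stated minimum.

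First I would treat the branch where Bob accepts $\hat\x = \x$. Since $\tilde c \ne c$, if the seeds were unchanged (so $(\tilde s', \tilde s) = (s', s)$) then $h(\x,(\tilde s', \tilde s)) = d$ forces $\tilde d = d$ and hence $\tilde c = c$, a contradiction; so this branch necessarily has $(\tilde s', \tilde s) \ne (s', s)$ and is precisely a substitution forgery against the MAC keyed by $\x$ with leakage $\z$. Invoking Lemma~\ref{lemma:mac2} (reading the authenticated message as $(s', s)$ and the tag as $d$), this branch contributes at most $\delta_{sub} = 3(r+2)2^{-(t + n\tilde{H}_{\infty}(X|Z) - n)}$, which is the $n\tilde{H}_{\infty}(X|Z)$ term inside the minimum.

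The second branch, where Bob accepts some $\hat\x \ne \x$, is where the fuzzy min-entropy enters and is the technical crux. Here I would first use the polynomial structure of $h$ to bound the number of candidates: for fixed $(\tilde d, \tilde s', \tilde s)$ the equation $\tilde d = h(\hat\x, (\tilde s', \tilde s))$ has at most $3(r+2)2^{n-t}$ solutions $\hat\x$, exactly as in the collision counts underlying Lemma~\ref{lemma:mac2}. Bob accepts such a candidate only if it additionally lands in $\mathcal{R}$, i.e. $\hat\x$ is $\nu$-close to $\y$; for a fixed candidate this probability, taken over $\y$ given Eve's view, is precisely what the conditional fuzzy min-entropy measures. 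Conditioning on $c$ leaks the $t$-bit tag $d$ about $\x$ (hence about $\y$), so I would apply Lemma~\ref{lemma:fuzzy} with $b = t$ to pass from $\tilde{H}_{\nu,\infty}^{\mathsf{fuzz}}(\X \mid \Z, C)$ back to $\tilde{H}_{\nu,\infty}^{\mathsf{fuzz}}(\X|\Z)$, and then union bound over the candidate set. The bookkeeping is arranged so that the $2^{\pm t}$ factors from the collision count and from the leakage offset the $2^{n}$ from the candidate count, leaving $3(r+2)2^{-(t - n + \tilde{H}_{\nu,\infty}^{\mathsf{fuzz}}(\X|\Z))}$, the fuzzy term of the minimum.

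Finally I would combine the branches: the robustness is broken with probability at most the larger of the two contributions, namely $3(r+2)2^{-(t - n + \min\{n\tilde{H}_{\infty}(X|Z),\, \tilde{H}_{\nu,\infty}^{\mathsf{fuzz}}(\X|\Z)\})}$, and demanding this be at most $\delta$ rearranges to the threshold $t \ge n + \log(3(r+2)/\delta) - \min\{n\tilde{H}_{\infty}(X|Z),\, \tilde{H}_{\nu,\infty}^{\mathsf{fuzz}}(\X|\Z)\}$. I expect the main obstacle to be Branch~2: getting the exponent to come out exactly as $-(t - n + \tilde{H}_{\nu,\infty}^{\mathsf{fuzz}}(\X|\Z))$ requires carefully matching the algebraic collision count against the fuzzy-guessing bound and correctly charging the $t$ bits leaked by the tag $d$ through Lemma~\ref{lemma:fuzzy}, rather than double-counting or losing a spurious factor of $2^t$. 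Keeping track of the conditioning on $c$ (so that $\y$ is sampled from its posterior given Eve's entire view) while still being able to invoke the fuzzy min-entropy conditioned on $\Z$ alone is the delicate point.
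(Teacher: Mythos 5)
Your overall decomposition is the same as the paper's: split the adversary's success into (a) Bob accepting Alice's true $\x$, bounded by the substitution security of the MAC from Lemma~\ref{lemma:mac2}, and (b) Bob accepting some $\x_1\ne\x$, bounded via fuzzy min-entropy; then take the maximum. Branch (a) is handled exactly as in the paper and is fine. The problem is Branch (b), and it is precisely the spurious factor of $2^t$ you worried about: your plan does not eliminate it. You count candidates $\hat\x$ satisfying only the forged equation $\tilde d = h(\hat\x,(\tilde s',\tilde s))$, which gives the impersonation-style count $3(r+2)2^{n-t}$, and then you pay the $t$-bit leakage penalty from Lemma~\ref{lemma:fuzzy} to replace $\tilde H^{\mathsf{fuzz}}_{\nu,\infty}(\X\mid\Z,T')$ by $\tilde H^{\mathsf{fuzz}}_{\nu,\infty}(\X\mid\Z)-t$. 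Multiplying these out gives $3(r+2)\cdot 2^{n-t}\cdot 2^{t-\tilde H^{\mathsf{fuzz}}_{\nu,\infty}(\X\mid\Z)} = 3(r+2)2^{n-\tilde H^{\mathsf{fuzz}}_{\nu,\infty}(\X\mid\Z)}$, which is $2^t$ larger than the claimed $3(r+2)2^{-(t-n+\tilde H^{\mathsf{fuzz}}_{\nu,\infty}(\X\mid\Z))}$. You cannot instead drop the conditioning on $c$ to avoid the penalty, because the posterior of $(\x,\y)$ given Eve's view genuinely depends on the observed tag $d$.

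The missing idea is that the observed tag itself must be used as a second polynomial constraint on the candidate $\x_1$. Writing $\x = \x_1 + e$ with $e=(e_2\Vert e_1)$ and substituting into the observed equation $t' = h(\x,(s',s))$ turns the pair (observed equation, forged equation) into a system in $(y'_{2\x_1}, y'_{1\x_1})$, and a case analysis on $e$ (namely $e_1\ne 0$; $e_1=0, e_2\ne 0$; and $e=0$, the last reducing to the substitution argument of Lemma~\ref{lemma:mac2}) shows that at most $3(r+2)2^{n-2t}$ values of $\x_1$ are consistent with both. This substitution-style count of $2^{n-2t}$, rather than your $2^{n-t}$, is what absorbs the $2^{+t}$ from Lemma~\ref{lemma:fuzzy} and yields the exponent $-(t+\tilde H^{\mathsf{fuzz}}_{\nu,\infty}(\X\mid\Z)-n)$. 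That case analysis is the actual technical content of Branch (b) in the paper's proof, and your proposal as written does not supply it.
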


\begin{proof}
We need to prove that the construction~\ref{owska:robust} satisfies Definition~\ref{robustowska}. The algorithm $\ikemd(\cdot)$~\ref{alg:iK.dec} successfully outputs  an extracted key if there is a unique element $\hat{\x }$ in the set $\mathcal{R}$ such that $h(\hat{\x},s',s)$ is equal to the received hash value $d=h(\x,s',s)$. 
Note that, if the construction is $(\epsilon,\sigma)$-OW-SKA protocol, $\x$ is in the set  $\mathcal{R}$ with probability at least $(1 - \epsilon)$. 
%Let $\mathcal{R}_{inc}$ be the set of incorrect keys in $\mathcal{R}$ and $\x $ be the correct key in $\mathcal{R}$. Then $\mathcal{R} = \mathcal{R}_{inc} \cup \{\x \}$ and $|\mathcal{R}_{inc}| \le 2^\nu -1$. 
In the robustness experiment as defined in Definition~\ref{robustowska}, an adversary receives an % correctly 
authenticated message $c$ with   $c=(t',s',s)$, where 
{\footnotesize 
\begin{align}
\nonumber 
t'=&h\big(\x , (s',s)\big)\\
=&\big[s_2(y'_2)^{r+2} + {\sum}_{i=1}^r s'_i(y'_2)^i\big]_{1 \cdots t} + (y'_1)^{3}+ s_1 y'_1, \label{eqn:robxappn}
\end{align}
}
 and the last element of $s_{2}$ is non-zero.
Define $f(y'_2,(s',s))=s_2(y'_2)^{r+2} + {\sum}_{i=1}^r s'_i(y'_2)^i$. For fixed $s, s'$, we denote the RVs corresponding to $t'$, $c$,  $y'_1$, $y'_2$, $\x $ as $T'$, $C$,  $Y'_1$, $Y'_2$, $\X $, respectively and hence the randomness is over  ${\X\Y\Z}$ only. After observing %the correct authenticated 
the message $c$ corresponding to $\x $, the adversary tries to generate  $c'$ % an authenticated message
corresponding to %either
$\x $ or, some %a
$\x_1 \in \mathcal{R}$ such that $\x_1 \ne \x $.
%For $\x_1 \in \mathcal{R}$ such that $\x_1 \ne \x $, l
     Let $\delta_{\x}$ and $\delta_{\x_1}$ denote the success probabilities of the adversary in generating a $c'$ corresponding to the above two cases, respectively. 
%that an adversary can generate an authenticated message corresponding to $\x $ and $\x_1$ respectively given the correctly authenticated message $c$.
 Since Bob's algorithm looks for a unique element in $\mathcal{R}$, only one of the above two cases succeed and hence, 
%Since Bob searches for a unique element in $\mathcal{R}$ such that its hash value matches with the received hash value, 
the probability that an adversary can generate a forged message is at most $\mathsf{max}\{\delta_\x,\delta_{\x_1}\}$, and we have
%For any element $\x'' \in \mathcal{R}$, $\tilde{H}_{\infty}(\X'' |\Z, C ) \ge \tilde{H}_{\infty}(\X |\Z , C)$, where $\X''$ is the RV corresponding to $\x''$.
%Therefore, 
the success probability of the adversary bounded as: %that an adversary will be able to forge is
%\vspace{-.5em}
{\small
\begin{align}\nonumber
&\le \text{Expected probability that an adversary can generate a forged} \\\nonumber 
&\quad\quad\text{authenticated message given the message  $c=(t',s',s)$} \\\nonumber
&\quad\quad\text{corresponding to $\x $} \\\label{eq:forgetagappn}
&=\mathsf{max}\{\delta_{\x},\delta_{\x_1}\},
\end{align}
}
where the expectation is over $P_{\X\Y\Z}$.
We first bound $\delta_\x$. Let the adversary generate a forged message $(t'_f,s'_f,s_f)$ corresponding to $\x $ such that \\$t'_f=\big[s_{2f}(y'_2)^{r+2} + {\sum}_{i=1}^r s'_{if}(y'_2)^i\big]_{1 \cdots t} + (y'_1)^{3}+ s_{1f} y'_1$, $(t'_f,s'_f,s_f) \ne (t',s',s)$, and the last element of $s_{2f}$ is non-zero. Thus,
{\small
\begin{align}\nonumber
\delta_\x &\le \text{Expected probability that adversary can generate a forged} \\\nonumber
&\quad\quad\text{message $(t'_f,s'_f,s_f)$ valid when verified with $\x $, given } \\
\nonumber
&\quad\quad\text{the message  $c=(t',s',s)$} \\\nonumber
&=\mathbb{E}_{(c, \z) \leftarrow (C, \Z )}\Big[\mathsf{Pr}_{\X }\big[ t'_f=\big[s_{2f}(y'_2)^{r+2} + \\\nonumber
&\quad\quad {\sum}_{i=1}^r s'_{if}(y'_2)^i\big]_{1 \cdots t}+ (y'_1)^{3}+ s_{1f} y'_1 \text{ $|$ } C=c, \Z = \z \big]\Big] \\\label{eqn:forgedletax}
&\le 3(r+2)2^{-(t+n\tilde{H}_{\infty}(X |Z )-n)}
%%&\le (2^\nu - 1)(r+2)2^{-t} + (r+2)2^{-(t+\tilde{H}_{\infty}(\X |\Z )-n-\ell)}\text{ (from Lemma~\ref{lemma:mac2})}.
\end{align}
}
The expectation is over the distribution $P_{X|C=c, Z=z}$.
%In the probability expression, $\x $ is chosen according the distribution $P_{\X\Y\Z}$ conditioned on $c$ 
%??? WHAT DO YOU MEAN "or equivalently conditioned on $t'$." ???}  
The inequality~\ref{eqn:forgedletax} follows from the {\it substitution attack} part of Lemma~\ref{lemma:mac2} (i.e.,  equation~\ref{eqn:macsubappn}). 
%We now look to prove the following claim.

 We now compute the success probability (i.e., $\delta_{\x_1}$) that the forged value of $c$ 
 $(t'_{f_{\x_1}},s'_{f_{\x_1}},s_{f_{\x_1}})$ corresponds  to $\x_1 \neq \x $
%We now proceed to bound $\delta_{\x_1}$. Let the adversary try to generate a forged message $(t'_{f_{\x_1}},s'_{f_{\x_1}},s_{f_{\x_1}})$ corresponding to $\x_1 $ 
such that \\ 
{\small
\begin{align} \nonumber
t'_{f_{\x_1}}&=h\big(\x_1 , (s'_{f_{\x_1}},s_{f_{\x_1}})\big)=\big[s_{2{f_{\x_1}}}(y'_{2{\x_1}})^{r+2} + \\\label{eqn:robx1}
&{\sum}_{i=1}^r s'_{i{f_{\x_1}}}(y'_{2{\x_1}})^i\big]_{1 \cdots t} + (y'_{1{\x_1}})^{3}+ s_{1{f_{\x_1}}} y'_{1{\x_1}}
\end{align}
}
and $(t'_{f_{\x_1}},s'_{f_{\x_1}},s_{f_{\x_1}}) \ne (t',s',s)$, where $y'_{1{\x_1}}=[\x_1 ]_{1\cdots t}$, $y'_{2{\x_1}}=[\x_1 ]_{t+1 \cdots n}$, $s_{f_{\x_1}}=s_{2{f_{\x_1}}} \parallel s_{1{f_{\x_1}}}$, $s'_{f_{\x_1}}=(s'_{r{f_{\x_1}}}\parallel \cdots \parallel s'_{1{f_{\x_1}}})$, and the last element of $s_{2{f_{\x_1}}}$ is non-zero. Since addition and subtraction correspond to bit-wise exclusive-or, we obtain,
{\small
\begin{align}\nonumber
t'-t'_{f_{\x_1}}&=\Big[\big[s_2(y'_2)^{r+2}+ {\sum}_{i=1}^r s'_i(y'_2)^i\big]_{1 \cdots t} + (y'_1)^{3}+ s_1 y'_1 \Big]  -\\\nonumber
&\qquad \Big[\big[s_{2{f_{\x_1}}}(y'_{2{\x_1}})^{r+2} + {\sum}_{i=1}^r s'_{i{f_{\x_1}}}(y'_{2{\x_1}})^i\big]_{1 \cdots t}\\\nonumber\label{eqn:robusttag1}
&\qquad + (y'_{1{\x_1}})^{3}+ s_{1{f_{\x_1}}} y'_{1{\x_1}}\Big] \\\nonumber
&=\big[f(y'_2,s',s) - f(y'_{2{\x_1}},s'_{f_{\x_1}},s_{f_{\x_1}})\big]_{1 \cdots t} + \\ 
&\qquad(y'_1)^{3} - (y'_{1{\x_1}})^{3} + s_1 y'_1 - s_{1{f_{\x_1}}} y'_{1{\x_1}}
\end{align}
}
Since $\x$ can be written as $\x_1 + e$ for some $e=(e_2 \parallel e_1) \in GF(2^{n})$, then $(y'_2\parallel y'_1)=((y'_{2{\x_1}} + e_2) \parallel (y'_{1{\x_1}} + e_1))$. Substituting $(y'_2\parallel y'_1)=((y'_{2{\x_1}} + e_2) \parallel (y'_{1{\x_1}} + e_1))$ in equation~\ref{eqn:robusttag1}, we have 
{\small
\begin{align}\nonumber
t'-t'_{f_{\x_1}}&=\Big[\big[s_2(y'_{2\x_1}+e_2)^{r+2}+ {\sum}_{i=1}^r s'_i(y'_{2\x_1}+e_2)^i\big]_{1 \cdots t} + \\\nonumber
&\qquad (y'_{1\x_1}+e_1)^{3}+ s_1 (y'_{1\x_1}+e_1) \Big]  -\\\nonumber
&\qquad \Big[\big[s_{2{f_{\x_1}}}(y'_{2{\x_1}})^{r+2} + {\sum}_{i=1}^r s'_{i{f_{\x_1}}}(y'_{2{\x_1}})^i\big]_{1 \cdots t}\\\label{eqn:robusttag111}
&\qquad + (y'_{1{\x_1}})^{3}+ s_{1{f_{\x_1}}} y'_{1{\x_1}}\Big]. 
\end{align}
}
 It is a polynomial in two variables $y'_{2{\x_1}}$ and $y'_{1{\x_1}}$ of degree at most $(r+2)$. 
 There may be two cases: either $e_1 \ne 0$ or $e_1=0$.

{\bf Case 1.} Let $e_1 \ne 0$, then equation~\ref{eqn:robusttag111} implies that there are at most two values of $y'_{1\x_1}$ for any fixed value of $y'_{2\x_1}$. 
From equation~\ref{eqn:robx1}, we obtain that, for each
value of $y'_{1\x_1}$, there exist at most $(r+2)2^{n-2t}$ values of $y'_{2\x_1}$ since $s_{2{f_{\x_1}}} \ne 0$. Therefore, there are at most $2(r+2)2^{n-2t}$ values of $(y'_{2\x_1} \parallel y'_{1\x_1})$ (i.e. $\x_1$) that satisfy both the equation~\ref{eqn:robxappn} and equation~\ref{eqn:robx1} (and hence also equation~\ref{eqn:robusttag111}).

{\bf Case 2.} Let $e_1=0$. There may be two sub-cases: either $e_2 \ne 0$ or $e_2=0$.

{\bf Subcase 2(i).} Let $e_2 \ne 0$, then equation~\ref{eqn:robusttag111} implies that there are at most $(r+2)2^{n-2t}$ values of $y'_{2\x_1}$ for any fixed value of $y'_{1\x_1}$ (note that $s_2 \ne 0$). From equation~\ref{eqn:robx1}, for each value of $y'_{2\x_1}$, we see that there exist at most three values of $y'_{1\x_1}$. Consequently, there are at most $3(r+2)2^{n-2t}$ values of $(y'_{2\x_1} \parallel y'_{1\x_1})$ (i.e. $\x_1$) that satisfy both the equation~\ref{eqn:robxappn} and equation~\ref{eqn:robx1} (and hence also equation~\ref{eqn:robusttag111}).

{\bf Subcase 2(ii).} Let $e_2 =0$. Then $y'_2=y'_{2{\x_1}}$ and $y'_1=y'_{1{\x_1}}$. Now proceeding the same way as {\it substitution attack} part of the proof of Lemma~\ref{lemma:mac2}, we prove that there are at most $3(r+2)2^{n-2t}$ values of $(y'_{2\x_1} \parallel y'_{1\x_1})$ (i.e. $\x_1$) that satisfy both the equation~\ref{eqn:robxappn} and equation~\ref{eqn:robx1} (and hence also equation~\ref{eqn:robusttag111}).

Therefore, in any case, there are at most $3(r+2)2^{n-2t}$ values of $(y'_{2{\x_1}} \parallel y'_{1{\x_1}})$ (i.e. $\x_1 $) that satisfy both the equation~\ref{eqn:robxappn} and equation~\ref{eqn:robx1} (and hence also equation~\ref{eqn:robusttag111}).

\begin{comment}
The term $f(y'_2,s',s) - f(y'_2+e_2,s'_{f_{\x_1}},s_{f_{\x_1}})$ takes on each element of the field $GF(2^{n-t})$ at most $(r+2)$ times as $y'_{2{\x_1}}$ varies. Therefore, there are at most $(r+2)2^{n-t}/2^t=(r+2)2^{n-2t}$ values of $y'_{2{\x_1}}$ that satisfy equation~\ref{eqn:robusttag1} when $y'_{1{\x_1}}$ is fixed. From equation~\ref{eqn:robx1}, we see that, for each value of $y'_{2{\x_1}}$, there are at most three values of $y'_{1{\x_1}}$ that satisfies the equation. Hence, there are at most $3(r+2)2^{n-2t}$ values of $(y'_{2{\x_1}} \parallel y'_{1{\x_1}})$ (i.e. $\x_1 $) that satisfy both the equation~\ref{eqn:robx1} and equation~\ref{eqn:robusttag1}. 
\end{comment}

Let $\X_1$, $Y'_{2{\x_1}}$ and $Y'_{1{\x_1}}$ be the RVs corresponding to $\x_1$, $y'_{2{\x_1}}$ and $y'_{1{\x_1}}$ respectively.  We assume that the only way to attack the MAC and create a forged ciphertext is guessing the secret key for the MAC. The adversary tries to guess $\x_1$ such that the inequality:  $-\log(P_{\X |\Y }(\x_1 |\y )) \le \nu$ holds, where $\y$ in the secret key of Bob. That is, the adversary tries to guess $\x_1$ such that $P_{\X |\Y }(\x_1 |\y ) \ge 2^{-\nu}$. To have the maximum chance that the inequality $P_{\X |\Y }(\x_1 |\y ) \ge 2^{-\nu}$ holds, the adversary would choose the point $\x_1$ that maximizes the total probability mass of $\Y$ within the set $\{\y : P_{\X |\Y }(\x_1 |\y ) \ge 2^{-\nu}\}$. In addition, the adversary is given $\z$ and the authenticated message $c$. Therefore, an adversary can guess 
%{\color{red} SHOULD THE ADVERSARY'S GUESS BE FOR  (y), GIVEN THE INFORMATION (c,z) - SUCH THAT THE SUM OF PROB OF x THAT ARE AT DISTANCE $|...| \leq \nu$ FROM y IS MAXIMUM? THIS IS BECAUSE *ANY* x THAT IS AT DISTANCE$ < \nu$ WORKS - WE NEED A y THAT MAXIMIZES THIS SUM OF PROBS- SO WE NEED  MAXIMIZATION ON y AND SO
%$2^{-H_{\nu,\infty}^{\mathsf{fuzz}}(\Y | \Z = \z , T'=t')}$.
%THAT IS
%$-\log max_y 
%\sum_{x \; at\; distance \; less \;than \; \nu \; from \; y} P(x|y, C=c, Z=z) $
%}
$\x_1$ with probability at most $2^{-H_{\nu,\infty}^{\mathsf{fuzz}}(\X | \Z = \z , T'=t')}$ (Lemma~\ref{fuzzyentcor} proves that guessing $\x_1$ given $\Z$ and $C$ in this way is better than guessing Bob's secret key $\y$ given $\Z$ and $C$).  Consequently,      
each value $\x_1$ in $\mathcal{R}$ occurs with probability at most $2^{-H_{\nu,\infty}^{\mathsf{fuzz}}(\X | \Z = \z , T'=t')}$.   
%It is easy to see that $H_{\nu,\infty}^{\mathsf{fuzz}}(\X) \ge H_\infty(\X ) - \nu$. 
Since the size of the support of RV $T'$ is $2^t$, from Lemma~\ref{lemma:fuzzy}, we obtain  
$\tilde{H}_{\nu,\infty}^{\mathsf{fuzz}}(\X | \Z, T') \ge \tilde{H}_{\nu,\infty}^{\mathsf{fuzz}}(\X | \Z) - t$.

Thus,
{\small
\begin{align}\nonumber
\delta_{\x_1} &\le \text{Expected probability that an adversary can generate a forged} \\\nonumber
&\quad\quad\text{authenticated message $(t'_{f_{\x_1}},s'_{f_{\x_1}},s_{f_{\x_1}})$ corresponding to } \\\nonumber
&\quad\quad\text{$\x_1 $ given the message  $c=(t',s',s)$ corresponding to $\x $} \\\nonumber
&=\mathbb{E}_{(c, \z) \leftarrow (C, \Z )}\Big[\mathsf{Pr}_{(Y'_{2{\x_1}}\parallel Y'_{1{\x_1}}) }\big[ t'_{f_{\x_1}}=\big[s_{2{f_{\x_1}}}(y'_{2{\x_1}})^{r+2}+ \\\nonumber
&\quad\quad  \sum_{i=1}^r s'_{i{f_{\x_1}}}(y'_{2{\x_1}})^i\big]_{1 \cdots t} +  (y'_{1{\x_1}})^{3}+ s_{1{f_{\x_1}}} y'_{1{\x_1}}  \\\nonumber
&\quad\quad{\text{ $|$ } C=c, \Z = \z \big]\Big]} \\\nonumber
&=\mathbb{E}_{(c, \z) \leftarrow (C, \Z )}\Big[\mathsf{Pr}_{(Y'_{2{\x_1}}\parallel Y'_{1{\x_1}}) }\big[ t'_{f_{\x_1}}=\big[f(y'_{2{\x_1}},s'_{f_{\x_1}},s_{f_{\x_1}})\big]_{1 \cdots t}  \\\nonumber
&\qquad +  (y'_{1{\x_1}})^{3}+ s_{1{f_{\x_1}}} y'_{1{\x_1}} \\\nonumber 
&\qquad \wedge t' = \big[f(y'_2,s',s)\big]_{1 \cdots t} + (y'_1)^{3}+ s_1 y'_1 \text{ $|$ } C=c, \Z = \z \big]\Big] \\\nonumber
&\le \mathbb{E}_{(c, \z ) \leftarrow (C, \Z )}\big[3(r+2)2^{n-2t}2^{-H_{\nu,\infty}^{\mathsf{fuzz}}(\X |\Z = \z , T'=t')}\big] \\\nonumber
&= 3(r+2)2^{n-2t}\mathbb{E}_{(c, \z ) \leftarrow (C, \Z )}\big[2^{-H_{\nu,\infty}^{\mathsf{fuzz}}(\X |\Z = \z ,T'=t')}\big]  \\\nonumber
&= 3(r+2)2^{n-2t}\big[2^{-\tilde{H}_{\nu,\infty}^{\mathsf{fuzz}}(\X |\Z ,T')}\big] \\\nonumber
&\le 3(r+2)2^{n-2t}2^{-(\tilde{H}_{\nu,\infty}^{\mathsf{fuzz}}(\X |\Z )-t)} \\\label{eqn:robsub2}
&= 3(r+2)2^{-(t+\tilde{H}_{\nu,\infty}^{\mathsf{fuzz}}(\X |\Z )-n)}.
\end{align}
}

%Let {\scriptsize $H_{\nu,\infty}^{\mathsf{fuzz}}(\X|\Z=\z)=-\log\big(\mathsf{max}_{\y} \sum_{\x:-\log (P_{\X|\Y}(\x|\y)) \le \nu}\pr[\X= \x|\Z=\z]\big).$}

%If $\X_1$ and $\X $ are independent RVs, then $\X_1$ and $C$ are also independent, then proceeding the same way as Lemma~\ref{lemma:mac2} (and equation~\ref{eqn:macimp}), we obtain $\delta_{\x_1}=$.

Consequently, from equations~\ref{eq:forgetagappn},~\ref{eqn:forgedletax} and~\ref{eqn:robsub2}, we conclude that, after observing a message, the expected probability that an adversary will be able to forge a message is at most \\$\max\{3(r+2)2^{-(t+n\tilde{H}_{\infty}(X |Z )-n)},3(r+2)2^{-(t+\tilde{H}_{\nu,\infty}^{\mathsf{fuzz}}(\X |\Z )-n)}\}$
$=3(r+2)\big( 2^{-(t-n+\min\{n\tilde{H}_{\infty}(X |Z ),\tilde{H}_{\nu,\infty}^{\mathsf{fuzz}}(\X |\Z )\})}\big)$ $ \le \delta$ (if $t \ge n + \log\big(\frac{3(r+2)}{\delta}\big) - \min\{n\tilde{H}_{\infty}(X |Z ),\tilde{H}_{\nu,\infty}^{\mathsf{fuzz}}(\X |\Z )\}$).
 Therefore, if $t \ge n + \log\big(\frac{3(r+2)}{\delta}\big) - \min\{n\tilde{H}_{\infty}(X |Z ),\tilde{H}_{\nu,\infty}^{\mathsf{fuzz}}(\X |\Z )\}$, the OW-\ska protocol $\mathsf{OWSKA}_{a}$ given in construction~\ref{owska:robust} has robustness $\delta$.  
\end{proof}

\vspace{-.5em}
\section{Concluding remarks.}\label{conclusion}
\vspace{-.3em}

 We proposed a OW-\ska in source model, proved its security and robustness, and derived the established key length.
To our knowledge there is no explicit construction of OW-\ska to compare our protocol with.
%in the sense that a single message from Alice to Bob be used to obtain a secure shared secret  that is robust to the adversary's tampering with the communication.
There are %however
numerous OW-\skp constructions %as noted in
(Section~\ref{ap:related}) that can be the basis of  OW-\ska. % We noted that the 
Our construction in Section~\ref{constructionikem} %follows
is based on the construction in~\cite{sharif2020}, and uses a new
%The 
MAC construction (Section~\ref{robustness}).  
%is new and more efficient than other existing construction.
Interesting directions for future work will be improving efficiency of decoding (which is currently exponential), and 
proving capacity achieving property of the protocol.
%in section~\ref{constructionikem}.
%The composition of an iKEM with a one-time symmetric encryption gives an efficient PQ-secure HE in this model. 
\begin{comment}
An immediate application of the model is  constructing secure  biometric based encryption systems with provable security: a user stores their profile on a secure server, and for each access to the server will use a different reading of their biometric in an HE system.   
The construction in Section \ref{constructionikem} does not extract key from biometric data, but  uses it to {\em mask} (encrypt) a true random  session key.  This also provides implicit authentication for communication. 
The exact security of this application however requires  further analysis  and entropy estimation of multiple  biometric readings.
\end{comment}
%Combiners for iKEM and public-key KEM widens the  range of possible KEMs, and allow fuzzy data to be used for communication with  provable security.

\remove{
We initiated the study of KEM and hybrid encryption in preprocessing model, introduced  information theoretic KEM, and proved composition theorems for secure hybrid encryption. We defined
and constructed secure cryptographic combiners % that combine secure 
for iKEMs and public-key KEMs that ensure % and ensure that their application
%in hybrid encryption systems will result in 
secure hybrid encryption in the corresponding setups.  
Using iKEM will guarantee post-quantum security, with the unique security property of being secure against offline  attacks, that is particularly important in long-term security.

Our work raises many interesting research   %There are many interesting open
 questions %that arise from our work, 
 including construction of 
 fuzzy extractor based  iKEM  with security against higher number of queries, 
 iKEM construction in other correlated randomness setups 
such as  wiretap setting, source model  and satellite model of Maurer \cite{Maurer1993},
and 
construction of computational KEMs in preprocessing model.
}

%\pagebreak
\bibliographystyle{abbrv}
%\bibliography{bbl_main}
\bibliography{ska_arxiv}

\normalsize
%\pagebreak
%\pagebreak
%\pagebreak
%\newpage
%\clearpage
%\newpage
\appendix
%\begin{appendices}

%\section*{Preliminaries} \label{prelimappendix}

\section*{\skp protocol of Sharifian et al.~\cite{sharif2020}}\label{app:theo}

 	\begin{algorithm}[!ht]
 	\footnotesize
 	%\begin{algorithm}[H]
 		\SetAlgoLined
 		\DontPrintSemicolon
           
 		\SetKwInOut{Public}{Public}\SetKwInOut{Input}{Input}\SetKwInOut{Output}{Output}
            \Public{A publicly known distribution $P_{X Y Z}$}
 		\Input{$n$-component Alice's sample $\x \in \mathcal{X}^n$ and Bob's sample $\y \in \mathcal{Y}^n$, $\sigma$, $\epsilon$}
 		\Output{Alice's estimate of extracted key= $k_A$ and Bob's estimate of extracted key $k_B$}
 		\SetKw{KwBy}{by}
 		\SetKwBlock{Beginn}{beginn}{ende}
 		%\Begin{
                    $\bullet$ Initialization phase: $(i)$ Alice and Bob compute and share $\ell$, $t$, and $\nu$ for the hash families $h(\cdot)$ and $h'(\cdot)$. $(ii)$ They generate  and share the seeds $s$ and $s'$ for the hash families $h$ and $h'$ respectively.\\
                    //Phase: Information reconciliation \\
 		        1. Alice computes the hash $g=h(\x , s)$ and sends it to Bob\\
                    2. Bob makes  a list of possible values of $\x$: \\
                       $M(\X|\y):=\{\hat{\x}:-\log(P_{\X|\Y}(\hat{\x }|\y )) \le \nu\}$ \\
                    3. Bob searches for $\hat{\x }$ in $M(\X|\y)$ such that $h(\x ,s)=g$ \\
                    4. \If {either $\hat{\x }$ is not unique in $M(\X|\y)$ or no $\hat{\x }$ is found in $M(\X|\y)$ with such property}{Abort}
                    //Phase: Key extraction \\
                    5. Alice and Bob compute the extracted key $k_A=h'(\x ,s')$ and $k_B=h'(\hat{\x },s')$ respectively. \\
 		%} 
 		\vspace{-1em}
 		\parbox{\linewidth}{\caption{\footnotesize $\Pi_{OWSKA}$: A capacity-achieving secret key agreement protocol}} 
 		\label{alg:SKA}
 	\end{algorithm}

We briefly recall the construction of secure \skp protocol due to Sharifian et al.~\cite{sharif2020}. The protocol is designed for source model in which Alice, Bob and Eve have $n$ components of the source $(\X ,\Y , \Z )$ respectively according to a distribution $P_{\X \Y \Z}(\x ,\y , \z )$. The protocol uses two universal hash families: $h: \mathcal{X}^n \times \mathcal{S} \to \{0,1\}^t$ and $h': \mathcal{X}^n \times \mathcal{S'} \to \{0,1\}^\ell$. The construction is given in Algorithm~\ref{alg:SKA}. This construction and its security theorems in~\cite{sharif2020} are given for the case that $P_{\X \Y \Z}(\x ,\y , \z )=\prod_{i=1}^n P_{X_i Y_i Z_i}(x_i,y_i,z_i)$ is due to $n$ independent experiments that are not necessarily identical.

\raggedbottom
%\section*{Proof of some Theorems and Lemmas}\label{app:lemmatheorems}

%\subsection{\textbf{Proof of Lemma~\ref{lemma:minentropy}.}} \label{prooflemma:minentropy}

%\subsection{\textbf{Proof of Theorem~\ref{Thm:ikemotsecurity}.}}\label{proofThm:ikemotsecurity}
%\begin{comment}
%{\color{magenta} -REMOVE FROM ISIT

%{\bf Second method:}

%\end{comment}

%\subsection{\textbf{Proof Of Lemma~\ref{lemma:mac2}}}\label{prooflemma:mac2} 

%\subsection{\textbf{Proof Of Lemma~\ref{lemma:fuzzy}}}\label{prooflemma:fuzzy} 

%\begin{comment}
%\subsection{\textbf{Proof of Theorem~\ref{mac2:ctxt}.}}\label{proofmac2:ctxt} 

\end{document}